\begin{document}

\theoremstyle{plain}
\newtheorem{theorem}{Theorem}

\newtheorem{lemma}[theorem]{Lemma}
\newtheorem{corollary}[theorem]{Corollary}
\newtheorem{conjecture}[theorem]{Conjecture}
\newtheorem{proposition}[theorem]{Proposition}

\theoremstyle{remark}
\newtheorem*{remark}{Remark}

\newcommand{\bea}{\begin{eqnarray}}
\newcommand{\eea}{\end{eqnarray}}
\newcommand{\e}{\eta}
\newcommand{\an}{\textbf{a}}
\newcommand{\bn}{\textbf{b}}
\newcommand{\cn}{\textbf{c}}
\def\bi{\begin{itemize}}
\def\ei{\end{itemize}}
\def\bc{\begin{center}}
\def\ec{\end{center}}
\def\E{{\cal E}}

\newcommand{\C}{\mathbb{C}}
\def\R{\hbox{$\mit I$\kern-.6em$\mit R$}}
\def\N{\hbox{$\mit I$\kern-.6em$\mit N$}}
\def\Y{Y}
\def\bk#1{\langle #1 \rangle}
\def\mw#1{\left< #1\right>}


\def\be{\begin{equation}}
\def\ee{\end{equation}}
\def\ba{\begin{align}}
\def\ea{\end{align}}

\newcommand{\mC}{\mathcal{C}}
\newcommand{\mE}{\mathcal{E}}
\newcommand{\mZ}{\mathcal{Z}}
\newcommand{\mU}{\mathcal{U}}
\newcommand{\mV}{\mathcal{V}}
\newcommand{\mA}{\mathcal{A}}
\newcommand{\mF}{\mathcal{F}}
\newcommand{\mI}{\mathcal{I}}
\newcommand{\mH}{\mathcal{H}}
\newcommand{\mL}{\mathcal{L}}
\newcommand{\mM}{\mathcal{M}}
\newcommand{\mT}{\mathcal{T}}
\newcommand{\mN}{\mathcal{N}}
\newcommand{\eqdef}{\equiv}

\newcommand{\fm}{\mathcal{F}_{\bf{m}}}
\newcommand{\am}{\mathcal{A}^{\textbf{m}}}
\newcommand{\dm}{\mathcal{D}(\mathrm{H}_{{\bf m}})}
\newcommand{\lr}{\rangle\langle}
\newcommand{\la}{\langle}
\newcommand{\ra}{\rangle}
\newcommand{\tr}{{\rm Tr}}

\newcommand{\mc}[1]{\mathcal{#1}}
\newcommand{\mbf}[1]{\mathbf{#1}}
\newcommand{\mbb}[1]{\mathbb{#1}}
\newcommand{\mrm}[1]{\mathrm{#1}}

\newcommand{\bra}[1]{\langle #1|}
\newcommand{\ket}[1]{|#1\rangle}
\newcommand{\braket}[3]{\langle #1|#2|#3\rangle}
\newcommand{\ip}[2]{\langle #1|#2\rangle}
\newcommand{\op}[2]{|#1\rangle \langle #2|}

\newcommand{\mbN}{\mathbb{N}}

\newcommand{\one}{\mbox{$1 \hspace{-1.0mm}  {\bf l}$}}

\definecolor{eric}{rgb}{0,.5,.2}
\newcommand{\eric}[1]{#1}

\newcommand{\review}[1]{{\color{red} #1}}

\title{Almost all multipartite qubit quantum states have trivial stabilizer}
\author{Gilad Gour}\email{giladgour@gmail.com}
\affiliation{
Department of Mathematics and Statistics,
University of Calgary, AB, Canada T2N 1N4}
\affiliation{
Institute for Quantum Science and Technology,
University of Calgary, AB, Canada T2N 1N4}
\author{Barbara Kraus}\email{Barbara.Kraus@uibk.ac.at}
\affiliation{Institute for Theoretical Physics, University of Innsbruck, Innsbruck, Austria}
\author{Nolan R. Wallach}\email{nwallach@ucsd.edu}
\affiliation{Department of Mathematics, University of California/San Diego,
        La Jolla, California 92093-0112}

\date{\today}


\begin{abstract}
The stabilizer group of an $n$-qubit state $|\psi\ra$ is the set of all matrices of the form $g=g_1\otimes\cdots\otimes g_n$,
with $g_1,...,g_n$ being any $2\times 2$ invertible complex matrices, that satisfy $g|\psi\ra=|\psi\ra$.
We show that for 5 or more qubits, except for a set of states of zero measure, the stabilizer group of multipartite entangled states is trivial; that is, containing only the identity element.
We use this result to show that for 5 or more qubits, the action of deterministic local operations and classical communication (LOCC) can almost always be simulated simply by local unitary (LU) operations. This proves that almost all $n$-qubit states with $n\geq 5$ can neither be reached nor converted into any other ($n$--partite entangled), LU--inequivalent state via deterministic LOCC. We also find a simple and elegant expression for the maximal probability to convert one multi-qubit entangled state to another for this generic set of states.
\end{abstract}

\maketitle

\section{Introduction}

Multipartite pure states lie at the heart of several fields of physics such as quantum information and condensed matter physics \cite{AmFa08,reviews}. Thereby, the local symmetries of the states often play a very important role. In condensed matter physics, the symmetries of a gapped Hamiltonian, and therefore of the ground state, are crucial in determining the phase in which the system is \cite{ScPe10}. Stabilizer states, which have applications in quantum error correction \cite{GoThesis,NiCh00}, measurement based quantum computation \cite{RaBr01}, and secret sharing \cite{secretsharing}, are in fact characterized by their local symmetries. In entanglement theory the local symmetries dictate which transformations among pure states are possible via local operation \cite{GoWa11,dVSp13}, which is, as we explain below, essential to reveal the entanglement properties of a state \cite{reviews}.

The identification of the local symmetries of a $n$--qubit state, $\ket{\Psi}$, i.e. the operators $g=g_1\otimes\ldots \otimes g_n$, with $g_i\in G$, where $G$ denotes a subgroup (or the whole group) of $\tilde{G}=GL(2)^{\otimes n}$, such that $g\ket{\Psi}=\ket{\Psi}$ is therefore an important task in all these fields. The subgroup of local symmetries, $G_\Psi$, is called stabilizer of the state $\ket{\Psi}$ (in $G$). We show here that for almost all states, i.e. up to a zero--measure set, of more than $4$ qubits, the stabilizer (in $\tilde{G}$) is trivial, i.e. the only local symmetry of the state is the identity. In particular we show that the set of states with trivial stabilizer is of full measure, open, and dense in the Hilbert space of $n\geq 5$ qubits.

To explain the far reaching consequences of this result in the context of entanglement theory let us note that one of the main reasons why bipartite (pure state) entanglement is so well understood, is that all possible transformations via local operations assisted by classical communication (LOCC), among pure states could be easily characterized \cite{nielsen}. As  entanglement theory is a resource theory where the free operations are LOCC \cite{reviews}, which implies that a function, $E$, is an entanglement measure (for pure states) if $E(\ket{\Psi})\geq E(\ket{\Phi})$ whenever $\ket{\Psi}$ can be transformed into $\ket{\Phi}$ via LOCC, this characterization allowed for the derivation of simple necessary and sufficient conditions on entanglement measures \cite{reviews,nielsen}. In the multipartite case, however, the characterization of LOCC transformations is far from being that simple. In contrast to the bipartite case even infinitely many rounds of classical communication might be necessary to accomplish certain transformations \cite{chit11}. Due to the complicated structure of LOCC \cite{chitambar1}, several other classes of operations have been considered to gain a better understanding of multipartite entanglement. These include Local Unitary (LU) operations, which do not alter the entanglement contained in the state as they can be applied and inverted locally \cite{Kr10}. Moreover, separable operations (SEP) \cite{rains}, which are strictly larger than LOCC \cite{sepnotlocc}, and stochastic LOCC (SLOCC) \cite{slocc}, i.e. probabilistic LOCC transformations, have been investigated. Recent investigations of SEP, showed that the existence of a transformation among two pure states depends strongly on the stabilizer of the state \cite{GoWa11}. The main result presented here, namely the fact that for almost all $n$--qubit states (with $n\geq 5$) the stabilizer is trivial implies then (see Theorem 6), that almost no pure state transformation via LOCC (and even via SEP) is possible. That is, almost all states are isolated, i.e. they can neither be transformed into another LU--inequivalent state, nor can they be reached from an LU--inequivalent state \footnote{Note that this fact has been proven for four qubits and three three--level systems using completely different tools in \cite{dVSp13,SpdV16,HeSp15}.}. Hence, the Maximally Entangled Set (MES) \cite{dVSp13}, which can be viewed as the generalization of the maximally entangled bipartite state, i.e. the minimal set required to generate an arbitrary $n$--qubit entangled state (for $n>4$) via LOCC, is of full measure in the Hilbert space. Furthermore, the states which are most relevant in the context of state manipulations via local operations are of measure zero. That is, as expected, the most powerful states in this context are very rare. Given the fact that deterministic transformations are impossible for generic states, we investigate also probabilistic transformations. In particular, we derive a very simple expression of the maximal success probability with which one state can be transformed into the other for this generic set of states.

We now provide more details of the content of this paper. First, we introduce our notation and recall some important results about Lie groups and algebraic geometry, such as the principal orbit type theorem (Theorem 1) and the Kempf--Ness theorem (Theorem 2). The latter characterizes critical states, i.e. states whose single qubit reduced states are all proportional to the identity. The principal orbit type theorem implies, in particular, that for a compact Lie group, such as the local unitaries, if there exists an $n$--qubit state whose unitary stabilizer is trivial then the set of states which have a trivial unitary stabilizer is open and dense in the whole Hilbert space of $n$-qubits, ${\cal H}_n$. We therefore proceed by first presenting an example of a $n$-qubit state, $\ket{L_n}$, which is critical and has a trivial stabilizer (in $G$) for any $n>4$ (Lemma 3). Then, using the Kempf-Ness theorem and other techniques from algebraic geometry, we show that the principal orbit type theorem can be used to prove that the set of states with trivial stabilizer in $\tilde{G}$ is of full measure, open, and dense in ${\cal H}_n$ (Theorem 4).

We next present the consequences of this result in entanglement theory.
There, we show that non--trivial (i.e. not Local Unitary (LU)) transformations among states with trivial stabilizers are impossible (Theorem 6). Hence, there is almost no transformation possible among entangled $n$--qubit states with $n>4$. Given that deterministic transformations are almost never possible we then study probabilistic transformations and present a simple formula for the maximal success probability with which one can transform one generic state, i.e. a state with trivial stabilizer, into the other via LOCC (Theorem 7).

\section{Preliminaries}

Let us start by introducing our notation and recalling some powerful theorems about Lie groups. We denote by  $\mH_{n}\eqdef\otimes^{n}\mathbb{C}^{2}$ the Hilbert space of $n$ qubits (we will use the symbol $\eqdef$ for an equality that represents a definition). Given a group $G \subset GL(\mH_{n})$  the \emph{stabilizer} of a state $|\psi\ra\in\mH_n$ with respect to this group (also known as an isotropy subgroup~\cite{Bredon}) is
\be
G_\psi\eqdef \left\{g\in G\;\Big|\;g|\psi\ra=|\psi\ra\right\}\subset G
\ee
Two stabilizer subgroups $G_\psi$ and $G_\phi$ are said to have the same \emph{type} if there exists a
$g\in G$ such that $G_\phi=g G_\psi g^{-1}$. That is, two groups are of the same type if they are conjugate in $G$.
Note that
$
gG_\psi g^{-1}=G_{g\psi}
$
for any $g\in G$ and $|\psi\ra\in\mH_n$. Therefore, $G_\psi$ and $G_\phi$ have the same type if and only if there exists $g\in G$ such that $G_\phi=G_{g\psi}$. Given a state $|\psi\rangle\in\mH_{n}$ we denote by
\be
G|\psi\rangle\eqdef\left\{g|\psi\rangle\;\Big|\;g\in G\right\}
\ee
the \emph{orbit} of $|\psi\ra$ under the action of $G$ (note that the orbit contains states that are not necessarily normalized).
Any orbit $G|\psi\ra$ is an embedded submanifold of $\mH_n$ and hence has a dimension.
Moreover, $G|\psi\ra$ is isomorphic to the left coset of $G_\psi$ in $G$, namely $G|\psi\ra \cong G/G_\psi$.
We therefore say that the orbits $G|\psi\ra$ and $G|\phi\ra$ have the same type if $G_\psi$ and $G_\phi$ have the same type. In addition, we say that an orbit $G|\psi\ra$ is of a lower type than $G|\phi\ra$, denoting it by
$
G|\psi\ra\prec_{type} G|\phi\ra\;\text{or equivalently by}\; G/G_\psi\prec_{type} G/G_{\phi}
$,
if $G_\phi$ is conjugate in $G$ to a subgroup of $G_\psi$.

An example of orbits which are of the same type would be given by two generic seed states of different SLOCC--classes of four qubits, denoted by $\ket{\Psi(a,b,c,d)}$ \cite{VeDe02}. The stabilizer of these states is (up to LUs) $\{\sigma_x^{\otimes 4},\sigma_y^{\otimes 4},\sigma_z^{\otimes 4},I\}$, where here, and in the following, $\sigma_x,\sigma_y,\sigma_z$ denote the Pauli operators and $I$ the identity operator. Hence, the stabilizers, $G_{\Psi(a,b,c,d)}, G_{\Psi(a',b',c',d')}$, and the orbits $G\ket{\Psi(a,b,c,d)}, G\ket{\Psi(a',b',c',d')}$ are of the same type (for any choice of $G$). Note that if the stabilizer of a state is trivial, i.e. $G_\psi=\{I\}$, it is always of maximal type, as $I$ is contained in any subgroup of $G$ (in other words $G/G_\psi\prec_{type} G/\{I\}$ for any $\psi\in\mH_n$).

The following fundamental theorem, from the field of Lie groups, shows the existence of a maximal element (known as the principal orbit type) under the preorder $\prec_{type}$.  This key theorem can be found in \cite{Bredon}, as a combination of Theorem 3.1 and Theorem 3.8.

\begin{theorem}{\rm \cite{Bredon}}\label{principal} {\rm \bf The principal orbit type theorem}
Let $C$ be a compact Lie group acting differentiably on a connected smooth
manifold $\mM$ (in this paper we assume $\mM\subset\mH_n$).
Then, there exists a principal orbit type; that is, there exists a state $\phi\in \mM$ such that $C/C_\psi\prec_{type} C/C_{\phi}$ for all $|\psi\ra\in \mM$.
Furthermore, the set of $|\psi\ra\in \mM$ such that $C_{\psi}$ is conjugate to $C_\phi$ is open
and dense in $\mM$ with complement of lower dimension and hence of measure 0.
\end{theorem}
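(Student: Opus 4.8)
The plan is to deduce the theorem from the slice theorem for smooth actions of compact Lie groups, together with an induction on $\dim\mM$. First I would equip $\mM$ with a $C$-invariant Riemannian metric (obtained by averaging any metric over the compact group $C$) and use its exponential map to build, around each $x\in\mM$, a \emph{linear slice}: a $C_x$-invariant open ball $V_x$ in the normal space to the orbit $Cx$ at $x$, such that a $C$-invariant tubular neighborhood of $Cx$ is $C$-equivariantly diffeomorphic to the associated bundle $C\times_{C_x}V_x$, with the point $[g,v]$ having isotropy group $g(C_x)_v g^{-1}$, where $(C_x)_v$ denotes the stabilizer of $v$ in the linear $C_x$-action. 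From this model two facts follow at once and will be used repeatedly: (i) every $y$ in such a neighborhood has $C_y$ conjugate in $C$ to a subgroup of $C_x$, so ``being subconjugate to $C_x$'' is an open condition; and (ii) for each closed subgroup $H\le C$, the set $\mM_{(H)}$ of points with isotropy conjugate to $H$ is a locally closed smooth submanifold of $\mM$.

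The core is the linear case: if a compact Lie group $K$ acts linearly on a finite-dimensional real vector space $V$, there is a closed subgroup $P\le K$ such that $\{v\in V\,:\,K_v\text{ conjugate to }P\}$ is open and dense, only finitely many orbit types occur, and $P$ is subconjugate to $K_w$ for every $w\in V$. I would prove this by induction on $\dim V$, the cases $\dim V\le1$ being immediate. For the inductive step, restrict the action to the unit sphere $S(V)$, which is connected of dimension $\dim V-1$; around any $x_0\in S(V)$ the slice theorem on $S(V)$ reduces the local picture to a linear representation of $K_{x_0}$ of dimension $<\dim V$, to which the inductive hypothesis applies, giving a well-defined local principal type whose stratum is open and dense in the slice neighborhood. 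Where two such slice neighborhoods overlap, their local principal strata are open and dense in the overlap, hence meet, and the isotropy of a common point is conjugate in $K$ to both local principal groups; by connectedness of $S(V)$ this forces a single principal type $P$ for $S(V)$, with $\{w\in S(V):K_w\text{ conjugate to }P\}$ open and dense and complement a finite union of strata of dimension $<\dim V-1$. Since $K_v=K_{v/|v|}$ for $v\ne0$ and $K_0=K\supseteq K_v$ for all $v$, this $P$ is subconjugate to every isotropy group on $V$, the set $\{v:K_v\text{ conjugate to }P\}$ is the open cone over $S(V)_{(P)}$, hence open and dense in $V$, and its complement is a finite union of submanifolds of dimension $<\dim V$.

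Now I would return to $\mM$. Applying the linear case to each slice representation $V_x$ shows that near every point of $\mM$ there is a well-defined local principal orbit type, its stratum is open and dense in the slice neighborhood, only finitely many orbit types occur there, and the local principal isotropy group of $V_x$, viewed in $C$, is subconjugate to $C_x$. To globalize, note that the local principal type is locally constant up to conjugacy: where two slice neighborhoods overlap, their open dense local principal strata meet, forcing the two local principal groups to be conjugate in $C$. Since $\mM$ is connected, all these conjugacy classes coincide in a single class $(P)$, and $\mM_{(P)}$, the union of the local principal strata, is open and dense. Its complement is, in each slice chart, a finite union of submanifolds of dimension $<\dim\mM$ (by (ii) and the linear case), hence, by second countability of the manifold $\mM$, a countable union of such submanifolds; therefore it has topological dimension $<\dim\mM$ and Lebesgue measure $0$. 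Finally, given any $x\in\mM$, choosing $y\in\mM_{(P)}$ inside a slice neighborhood of $x$ gives, by (i), that $C_y$, a representative of $(P)$, is subconjugate to $C_x$; in the paper's notation this reads $C/C_x\prec_{type}C/C_y$ for all $x\in\mM$, so any $\phi\in\mM_{(P)}$ is a principal point, which is exactly the assertion of the theorem.

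The hard part is the slice theorem and its corollaries (i) and (ii): producing the $C$-equivariant tubular neighborhood of an orbit and verifying that the linear action on the normal space is precisely the local model. Everything afterwards is comparatively routine once the slice theorem is available: the dimension induction in the linear case, the connectedness-based patching of local principal types, and the bookkeeping that collects the non-principal strata into a countable union of lower-dimensional submanifolds.
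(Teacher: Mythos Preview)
The paper does not prove this theorem at all: it is quoted from Bredon's book \cite{Bredon} (``a combination of Theorem~3.1 and Theorem~3.8'') and used as a black box throughout. There is therefore nothing in the paper to compare your argument against.

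That said, your sketch is essentially the standard proof one finds in Bredon: the differentiable slice theorem (via an invariant Riemannian metric and the normal exponential map) reduces the local structure near an orbit $Cx$ to the linear action of $C_x$ on the normal slice $V_x$; the linear case is then handled by induction on dimension via the unit sphere; and connectedness of $\mM$ globalizes the local principal type. The consequences you label (i) and (ii) are exactly Bredon's local structure corollaries of the slice theorem. Your argument is sound as a high-level outline. One small point worth tightening: in the linear step you pass to the unit sphere $S(V)$ and apply the inductive hypothesis ``locally'' via the slice theorem on $S(V)$; strictly speaking your induction hypothesis is phrased for linear actions on vector spaces, so you should either state the induction directly for compact connected manifolds of smaller dimension, or note explicitly that the slice on $S(V)$ is again a linear $K_{x_0}$-representation of dimension $<\dim V$ (which you do) and that this suffices. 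Also, $S(V)$ is connected only when $\dim V\ge 2$, which is exactly the range in which you need it, but it is worth saying so.
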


In this paper we will consider the following 4 different groups all acting
on $\mH_n$:
\begin{align}
& G\eqdef SL(2,\mathbb{C})\otimes\cdots\otimes
SL(2,\mathbb{C})\subset SL(\mH_{n})\label{g}\\
& K\eqdef SU(2)\otimes\cdots\otimes
SU(2)\subset SU(\mH_{n})\label{k}\\
& \tilde{G}\eqdef GL(2,\mathbb{C})\otimes\cdots\otimes
GL(2,\mathbb{C})\subset GL(\mH_{n})\label{gl}\\
& \tilde{K}\eqdef U(2)\otimes\cdots\otimes
U(2)\subset U(\mH_{n})\label{u}
\end{align}
From here on $G,K,\tilde{G}$, and $\tilde{K}$, will always refer to these four groups. Let us now present a very important set of states, the so--called critical states. Let $Lie(G)$ be the Lie algebra of $G$ contained in $End(\mathcal{H}_{n})$. The set of \emph{critical} states is defined as
$$
Crit(\mathcal{H}_{n})\equiv\{\phi
\in\mathcal{H}_{n}|\left\langle \phi|X|\phi\right\rangle =0,X\in
Lie(G)\}.
$$
Many important states in quantum information such as the Bell states, GHZ states~\cite{ghz}, cluster states~\cite{Rau01}, graph states~\cite{Hei05}, and code states~\cite{NiCh00} are all critical. Moreover, as discussed in~\cite{GW13},
the orbit $G\cdot Crit(\mH_n)$ (i.e. the union of all orbits containing a critical state)
is open, dense, and of full measure in $\mH_n$.
The  following theorem provides a fundamental characterization of critical states.
\begin{theorem}\label{KN}{\rm  \bf The Kempf-Ness theorem~\cite{KN}}
\begin{enumerate}
\item $\phi\in Crit(\mathcal{H}_{n})$ if and
only if $\left\Vert g\phi\right\Vert \geq\left\Vert \phi\right\Vert $ for all
$g\in G$, where here and in the following $\left\Vert \phi \right\Vert$ denotes the Euclidean norm of $\ket{\phi}\in \mathcal{H}_{n}$.
\item If $\phi\in Crit(\mathcal{H}_{n})$ and $g\in G$ then $\left\Vert g\phi\right\Vert
\geq\left\Vert \phi\right\Vert $ with equality if and only if $g\phi\in K\phi$.
Moreover, if $g$ is positive definite then the equality condition holds if and only if
$g\phi=\phi$.
\item If $\phi\in\mathcal{H}_{n}$ then $G\phi$ is closed in $\mathcal{H}_{n}$ if
and only if $G\phi\cap Crit(\mathcal{H}_{n})\neq\emptyset$.
\end{enumerate}
\end{theorem}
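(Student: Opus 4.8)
The plan is to base everything on the Cartan (polar) decomposition of $G=SL(2,\C)^{\otimes n}$ relative to its maximal compact subgroup $K=SU(2)^{\otimes n}$: every $g\in G$ factors as $g=k\,p$ with $k\in K$ and $p$ positive definite, and every such $p$ equals $\exp(X)$ for a unique Hermitian element $X\in Lie(G)$, i.e. one of the form $\sum_{i=1}^{n}I\otimes\cdots\otimes X_{i}\otimes\cdots\otimes I$ with each $X_{i}$ traceless Hermitian. Because $Lie(G)$ is the real span of its Hermitian elements together with their $i$-multiples (the latter spanning $Lie(K)$), the condition $\phi\in Crit(\mH_{n})$ is equivalent to $\langle\phi|X|\phi\rangle=0$ for every Hermitian $X\in Lie(G)$.

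The one elementary fact I would need is that, for a Hermitian $X\in Lie(G)$ and any $\phi\in\mH_{n}$, the function $h(t)=\|\exp(tX)\phi\|^{2}=\langle\phi|\exp(2tX)|\phi\rangle$ is convex, since $h''(t)=4\|X\exp(tX)\phi\|^{2}\geq0$, and has derivative $h'(0)=2\langle\phi|X|\phi\rangle$ at the origin. Part (1) then falls out immediately: if $\phi$ is critical, $h'(0)=0$ for every such $X$, so by convexity $t=0$ is a global minimum of $h$, and writing $g=k\exp(X)$ with $\|k\psi\|=\|\psi\|$ gives $\|g\phi\|=\|\exp(X)\phi\|\geq\|\phi\|$ for all $g\in G$; conversely, if $\|g\phi\|\geq\|\phi\|$ for all $g$, then each $h$ is minimized at $0$, so $h'(0)=2\langle\phi|X|\phi\rangle=0$ for all Hermitian $X$, and $\phi$ is critical.

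For part (2), let $\phi$ be critical and $g=k\exp(X)$; then $\|g\phi\|=h(1)^{1/2}\geq h(0)^{1/2}=\|\phi\|$, and if equality holds the convex function $h$ has $h(0)=h(1)$ and $h'(0)=0$, hence is constant on $[0,1]$; thus $h''\equiv0$ there, which forces $X\phi=0$, so $\exp(X)\phi=\phi$ and $g\phi=k\phi\in K\phi$. The reverse implication is trivial, and when $g$ is positive definite one may take $k=I$, so equality forces $g\phi=\phi$.

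For part (3), the forward direction I would obtain from standard reductive-group theory: if $G\phi$ is closed, it sits in $\mH_{n}$ as a closed subvariety isomorphic to $G/G_{\phi}$, so $v\mapsto\|v\|$ has compact sublevel sets on $G\phi$ and attains its minimum at some $\psi=g_{0}\phi$; the second half of part (1) applied to $\psi$ then shows $\psi$ is critical. The reverse direction---that $\psi\in G\phi$ critical forces $G\psi=G\phi$ to be closed---is where I expect the real difficulty. By part (1) and continuity, $\|\cdot\|\geq\|\psi\|$ on all of $\overline{G\psi}$, so $\psi$ minimizes the norm on the whole orbit closure, and the substance of the Kempf-Ness theorem is precisely that this minimal-norm locus is a single $K$-orbit contained in the unique closed $G$-orbit of $\overline{G\psi}$, which must then equal $G\psi$. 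The genuine obstacle is ruling out escape to infinity: for $g_{j}\psi\to u$ with Cartan parts $g_{j}=k_{j}\exp(X_{j})$ one may have $\|X_{j}\|\to\infty$, and to force $u\in G\psi$ one must analyze $\langle\psi|\exp(2X_{j})|\psi\rangle$ along the limiting direction of $X_{j}/\|X_{j}\|$. Rather than reprove this classical result I would cite it directly from~\cite{KN}.
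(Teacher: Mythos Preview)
The paper does not prove this theorem; it simply states it and attributes it to~\cite{KN}. So there is no ``paper's own proof'' to compare against---the intended proof in this context is precisely a citation, and your final sentence (``cite it directly from~\cite{KN}'') already matches that.

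Your sketch of parts (1), (2), and the forward direction of (3) is correct and is the standard convexity argument along one-parameter subgroups, using the Cartan decomposition $g=k\exp(X)$ with $X$ Hermitian in $Lie(G)$. The reduction of the criticality condition to Hermitian $X$, the computation $h''(t)=4\|X\exp(tX)\phi\|^{2}\geq0$, and the equality analysis in part (2) are all sound. For the forward direction of (3) you should note explicitly that $0\notin G\phi$ (since $g$ is invertible and $\phi\neq0$), so the infimum of $\|\cdot\|$ on the closed set $G\phi$ is positive and is attained; otherwise the argument is fine. Your honest identification of the reverse direction of (3) as the nontrivial content---that a critical vector forces its orbit to be closed---is accurate, and citing~\cite{KN} there is exactly what the paper does for the whole theorem.
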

From this theorem it follows~\cite{Ver03,GW10} that
$\phi\in\mathcal{H}_n$ is critical
if and only if all the \emph{local} density matrices are proportional to the identity (i.e.
each qubit is maximally entangled with the remaining $n-1$ qubits). Using this characterization of critical states it follows easily that the states mentioned above are all critical. Moreover, part 2 of the Kempf-Ness theorem above implies the uniqueness of critical states; that is, up to LUs there can be at most one critical state in any invertible SLOCC orbit. That is, critical states are the natural representatives of SLOCC orbits, and are the unique states in their SLOCC orbits for which each qubit is maximally entangled to the other qubits ~\cite{GoWa11}. Moreover, all these states are in the MES~\cite{dVSp13}. Another important property of a critical state, $\phi$, is that its stabilizer, $\tilde{G}_\phi$ is symmetric. That is if $g\in \tilde{G}_\phi$ then $g^\dagger \in\tilde{G}_\phi$ for any critical state, $\phi$. We give the proof of this result in Appendix A.

Let us now also introduce the notion of SL-invariant polynomials (SLIPs). A polynomial $f:\mH_n\to \mathbb{C}$ is said to be SL-invariant if $f(g|\psi\ra)=f(|\psi\ra)$ for all $g\in G$ and $|\psi\ra\in\mH_n$. SLIPs can be used to characterize the (generic) orbits of $G$ and have been studied extensively in literature (see, e.g.~\cite{OsSi12, GW13} and references therein). For $|\psi\ra\in\mH_n$ the polynomial $f_2(\psi)\eqdef \la\psi^{*}|\sigma_{y}^{\otimes n}|\psi\ra$, for instance, is homogeneous SLIP of degree 2. However, if $n$ is odd $f_2(\psi)=0$ for all $\psi$. In fact, it is know that for odd number of qubits the lowest degree of a non-trivial homogeneous SLIP is 4. An example of such a SLIP of degree 4 is ~\cite{GW13}
\be
f_{4}(\psi)\eqdef\det\begin{pmatrix}
\bra{\varphi_0^\ast}\sigma_y^{\otimes n-1}\ket{\varphi_0} & \bra{\varphi_0^\ast}\sigma_y^{\otimes n-1}\ket{\varphi_1}\\
\bra{\varphi_1^\ast}\sigma_y^{\otimes n-1}\ket{\varphi_0}) & \bra{\varphi_1^\ast}\sigma_y^{\otimes n-1}\ket{\varphi_1}
\end{pmatrix}.\label{f4}
\ee
Here, $|\psi\ra=|0\ra|\varphi_{0}\ra+|1\ra|\varphi_1\ra$ where $|\varphi_{0}\ra$ and $|\varphi_1\ra$ are (non-normalized) states in $\mH_{n-1}$ with $n$ odd.

\section{Existence of a state with trivial stabilizer}

Before stating our main result, we now show that there exists one state with a trivial stabilizer in $G$ (and therefore also in $K$).
We define a normalized state $|L_n\ra\in \mH_n$ as follows:
\be
\label{eq_Lstate}
|L_{n}\ra\eqdef\frac{1}{\sqrt{2(n-1)}}\left(\sqrt{n-2}\left\vert 11...1\right\rangle +\sqrt{n}|W_n\ra\right)\;,
\ee
where $|W_n\ra\eqdef\frac{1}{\sqrt{n}}\left(|10...0\ra+|01...0\ra+\cdots+|0...01\ra\right)$. The state $|L_n\ra$ has several interesting properties. It is symmetric under permutations of the qubits,
and the reduced density matrix of each qubit is proportional to the identity. Hence, $|L_n\ra\in Crit(\mH_n)$ is a critical state. Note that $|L_2\ra$ is a Bell state,
$|L_3\ra$ is the GHZ state (up to local unitaries), and $|L_4\ra=|L\ra$, where $|L\ra$ was originally defined on 4 qubits in~\cite{Ost2005,Love} and was shown in~\cite{GW10} to have many interesting properties, particularly identifying it as the 4 qubit state maximizing many important entanglement measures and being in the MES \cite{SpdV16}.

Using the SLIPs, $f_2$ and $f_4$ introduced above, and the Kempf-Ness theorem, we prove that the stabilizer $G_{\ket{L_n}}$ (and therefore also $K_{\ket{L_n}}$) is trivial for $n>4$, as stated in the following lemma (see Appendix A for the proof)
\begin{lemma}
\label{trivialstab}
If $n>4$ then the stabilizer of $|L_{n}\ra$ in $G$ is
trivial. That is, if $g\in G$ and $g|L_n\ra=|L_{n}\ra$ then $g=I$, where $I$ is the identity element of $G$.
\end{lemma}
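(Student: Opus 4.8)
The plan is to reduce the statement, via the Kempf--Ness theorem, to a claim about the \emph{unitary} stabilizer of $|L_n\ra$, and then to settle that claim by elementary computations with the coefficients of $|L_n\ra$ and of its two--qubit marginals. First, given $g\in G_{|L_n\ra}$, I would take its polar decomposition $g=up$ with $u$ unitary and $p=\sqrt{g^\dagger g}$ positive definite. Since $u^\dagger g=p$ and $g$ fixes $|L_n\ra$, we have $\|p|L_n\ra\|=\||L_n\ra\|$; because $|L_n\ra$ is critical and $p\in G$ is positive definite, part~2 of the Kempf--Ness theorem (Theorem~\ref{KN}) forces $p|L_n\ra=|L_n\ra$, hence $p\in G_{|L_n\ra}$ and $u=gp^{-1}\in G_{|L_n\ra}\cap K=K_{|L_n\ra}$. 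As $g=g_1\otimes\cdots\otimes g_n$ with $\det g_i=1$, uniqueness of the polar decomposition yields $p=p_1\otimes\cdots\otimes p_n$ with each $p_i$ positive definite and $\det p_i=1$. So it suffices to show (a) each $p_i=I$, and (b) $K_{|L_n\ra}=\{I\}$.

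For (a) I would write $p_i=e^{A_i}$ with $A_i$ Hermitian and traceless, so that $p^t=e^{tB}\in G$ for all $t\in\mathbb{R}$, where $B=\sum_i I\otimes\cdots\otimes A_i\otimes\cdots\otimes I$ is Hermitian; since $e^B|L_n\ra=|L_n\ra$ and $B$ is Hermitian, $|L_n\ra$ lies in the kernel of $B$, i.e.\ $B|L_n\ra=0$. Writing $A_i=\left(\begin{smallmatrix}\alpha_i&\beta_i\\\gamma_i&-\alpha_i\end{smallmatrix}\right)$ and using that $|L_n\ra$ is supported only on $|1\cdots1\ra$ and the Hamming--weight--one strings, one reads off from $B|L_n\ra=0$: the coefficients of the weight--$(n-1)$ strings give $\beta_i=0$; the coefficients of the weight--two strings give $\gamma_i+\gamma_j=0$ for all $i\ne j$, hence all $\gamma_i=0$; and the coefficients of the weight--one strings give $\sum_i\alpha_i=2\alpha_j$ for all $j$, hence (as $n>2$) all $\alpha_i=0$. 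This step uses $n>4$ only so that the weights $0,1,2,n-1,n$ are pairwise distinct (for $n=3$ the weight--$(n-1)$ and weight--two strings coincide, consistent with $|L_3\ra$ being GHZ). Thus $p=I$ and $G_{|L_n\ra}=K_{|L_n\ra}$.

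For (b), if $u=u_1\otimes\cdots\otimes u_n\in K_{|L_n\ra}$ then $u_i\otimes u_j$ commutes with the two--qubit marginal $\rho_{ij}$ for each $i<j$. A direct computation gives $\rho_{12}=\tfrac{1}{2(n-1)}\big[(n-2)(|00\ra\la00|+|11\ra\la11|)+(|01\ra+|10\ra)(\la01|+\la10|)\big]$, with eigenvalues $\tfrac{n-2}{2(n-1)}$, $\tfrac{1}{n-1}$ and $0$, which are pairwise distinct exactly when $n>4$ (for $n=4$ the first two coincide, which is why $|L_4\ra$ has a nontrivial, order--four stabilizer). Hence $u_1\otimes u_2$ preserves each of the eigenspaces $\mathrm{span}\{|00\ra,|11\ra\}$, $\mathrm{span}\{|01\ra+|10\ra\}$, $\mathrm{span}\{|01\ra-|10\ra\}$; inspecting the images of $|00\ra$ and $|11\ra$ then forces $u_1,u_2$ to be both diagonal or both antidiagonal. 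By the permutation symmetry of $|L_n\ra$ the same dichotomy holds for every pair, so all $u_i$ are simultaneously diagonal or simultaneously antidiagonal. The antidiagonal case sends every computational basis vector to a multiple of its bitwise complement, so $u|L_n\ra$ is supported on the weight--$0$ and weight--$(n-1)$ strings, disjoint from the support of $|L_n\ra$ --- impossible. In the diagonal case, writing $u_i=\mathrm{diag}(\lambda_i,\bar\lambda_i)$, matching the coefficients of $|1\cdots1\ra$ and of the weight--one strings forces $\prod_i\lambda_i=1$ and each $\lambda_i=\pm1$ with an even number of $-1$'s, so $u$ equals the identity operator on $\mH_n$.

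The genuinely delicate step is (b): the unitary stabilizer, unlike the positive part, does not collapse to a linear/Lie--algebra condition, it is precisely where the hypothesis $n>4$ is used, and it needs both the eigenvalue computation for $\rho_{12}$ and the subsequent case distinction (diagonal versus antidiagonal, then ruling each out). One must also keep track of the scalar $\pm I$ ambiguity in the tensor factors, which is harmless since ``the identity element of $G$'' means the identity operator on $\mH_n$.
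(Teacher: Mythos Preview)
Your argument is correct and complete, but it follows a genuinely different route from the paper's proof.

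The paper does \emph{not} split $g$ into a unitary and a positive part. Instead it writes $g=g_1\otimes h$, applies $\la 0|\otimes I$ to the stabilizer equation to obtain an $(n-1)$--qubit identity, and then evaluates the SL--invariant polynomial $f_2$ (for $n$ odd) or $f_4$ (for $n$ even) on both sides. Because these SLIPs vanish on $|w_{n-1}\ra$ but not on the remaining piece, the off--diagonal entry $b$ of $g_1$ is forced to zero; Theorem~\ref{kncor} (that $g^\dagger$ also stabilizes a critical state) then kills the other off--diagonal entry $c$, permutation symmetry makes every $g_i$ diagonal, and a short computation on the diagonal entries finishes. Thus the paper's engine is the SLIP machinery that also drives the rest of the article.

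Your proof replaces the SLIP step by two more elementary mechanisms. For the positive part $p=\sqrt{g^\dagger g}$ you pass to the Lie algebra (Hermitian $B$, hence $e^B|L_n\ra=|L_n\ra\Rightarrow B|L_n\ra=0$) and read off $\beta_i=\gamma_i=\alpha_i=0$ directly from the Hamming--weight grading of $|L_n\ra$; this actually works already for $n\geq 4$. For the unitary part you use the two--qubit marginals: the spectrum of $\rho_{12}$ is nondegenerate precisely when $n>4$, which pins down the eigenspaces and forces each $u_i$ to be (simultaneously) diagonal or antidiagonal; the antidiagonal case is excluded by support considerations, and the diagonal case collapses to $u=I$. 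This is where your hypothesis $n>4$ is genuinely consumed, and your remark that for $n=4$ the first two eigenvalues coincide nicely explains the nontrivial Pauli stabilizer of $|L_4\ra$.

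In short: the paper's proof is algebraic (SLIPs plus the symmetry of the stabilizer of a critical state) and dovetails with the tools used later in Theorem~\ref{mtheorem}; yours is more hands--on (Lie--algebra kernel plus reduced--state spectra), avoids invariant theory entirely, and makes the role of the threshold $n>4$ transparent through the degeneracy of $\rho_{12}$. Both land on the same diagonal endgame $d_i^2=1$, $\prod d_i=1$, hence $g=I$ in $G$.
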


\section{Genericity of trivial stabilizer for $5$ or more qubits.}

We are now ready to present our main result.
\begin{theorem}\label{mtheorem}
For $n\geq 5$ there exists a subset of states $\mA\subset\mH_n$ that is open, dense, and of full measure in $\mH_n$,
such that the stabilizer group $\tilde{G}_{\psi}=\{I\}$ is trivial for all $|\psi\ra\in\mA$.
\end{theorem}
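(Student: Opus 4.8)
The plan is to combine Lemma~\ref{trivialstab}, which gives a single critical state $|L_n\ra$ with trivial $G$-stabilizer for $n\geq 5$, with the principal orbit type theorem (Theorem~\ref{principal}) applied to the compact group $\tilde K = U(2)^{\otimes n}$, and then to bootstrap from the compact (LU) statement to the full $\tilde G = GL(2,\C)^{\otimes n}$ statement using the Kempf--Ness theorem. The first step is to note that $\tilde K_{|L_n\ra}$ is essentially trivial: since $|L_n\ra$ is critical and $G_{|L_n\ra}=\{I\}$ by Lemma~\ref{trivialstab}, and since $\tilde K = K\cdot U(1)$ where the $U(1)$ consists of global phases $e^{i\theta}\one$ that act trivially on the projective state, the stabilizer $\tilde K_{|L_n\ra}$ of the \emph{vector} $|L_n\ra$ is contained in $K_{|L_n\ra}\cdot\{\text{phases fixing the vector}\} = \{I\}$. (More carefully: if $u_1\otimes\cdots\otimes u_n$ fixes $|L_n\ra$, rescale each $u_j$ by a phase so that it lies in $SU(2)$; the product picks up one overall phase, which must be $1$ since the rescaled operator still fixes $|L_n\ra$ up to that phase and $|L_n\ra\neq 0$; so the operator already lay in $K$, hence equals $I$.) Thus $|L_n\ra$ witnesses that the principal orbit type of $\tilde K$ acting on $\mH_n$ is the trivial-stabilizer type, and Theorem~\ref{principal} gives an open, dense, full-measure set $\mA_K\subset\mH_n$ with $\tilde K_\psi=\{I\}$ for all $|\psi\ra\in\mA_K$.

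The second step upgrades this to $\tilde G$. Here I would use the structure of $\tilde G$-orbits: by the remark following Theorem~\ref{KN}, the set $\tilde G\cdot Crit(\mH_n)$ is open, dense, and of full measure in $\mH_n$ (the cited statement is for $G$, but it passes to $\tilde G = G\cdot(\text{scalars})$ since scalars only rescale). So it suffices to control stabilizers of critical states and then transport along $\tilde G$. For a critical state $\phi$, Theorem~\ref{KN}(2) implies the key rigidity: $\tilde G_\phi$ is ``symmetric'' ($g\in\tilde G_\phi \Rightarrow g^\dagger\in\tilde G_\phi$, as stated in the excerpt and proved in Appendix A), and moreover any $g\in\tilde G_\phi$ has polar decomposition $g = up$ with $u$ unitary and $p$ positive definite, both of which must separately lie in $\tilde G_\phi$ (since $p = \sqrt{g^\dagger g}$ and $g^\dagger g\in\tilde G_\phi$); but a positive $p$ fixing a critical $\phi$ forces $p=\one$ by the last sentence of Theorem~\ref{KN}(2), and then $u=g\in\tilde K_\phi$. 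Hence for critical $\phi$ we have $\tilde G_\phi = \tilde K_\phi$. Combining with the genericity of the trivial $\tilde K$-stabilizer from step one, I would argue that for a generic critical state $\phi$, $\tilde G_\phi=\tilde K_\phi=\{I\}$; since SLIPs (such as $f_4$) are nonzero on $|L_n\ra$ and hence on a dense open set, generic critical states lie in invertible orbits, and conjugation $\tilde G_{g\phi}=g\tilde G_\phi g^{-1}$ shows the whole orbit $\tilde G\phi$ consists of trivial-stabilizer states. Taking the union over a generic family of critical states, and using that $\tilde G\cdot Crit(\mH_n)$ is full-measure, produces the desired set $\mA$.

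The third step is to verify that $\mA$ can be taken open, dense, and full measure simultaneously. Density and full measure follow from the two ingredients above (each is open-dense-full-measure, and a finite intersection or a $\tilde G$-saturation of such sets retains these properties, using that $\tilde G$ acts by measure-class-preserving diffeomorphisms). For openness one can invoke the semicontinuity built into Theorem~\ref{principal} on the compact side together with the fact that having trivial stabilizer is detected by the orbit map $\tilde G\to\mH_n$, $g\mapsto g\psi$, being an immersion of maximal rank $\dim\tilde G$ at the identity — an open condition on $\psi$; alternatively, $\mA\supseteq \mA_K\cap(\tilde G\cdot Crit(\mH_n))$ intersected with the nonvanishing locus of a suitable SLIP, which is an intersection of open sets.

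The main obstacle I anticipate is the passage from the compact group $\tilde K$ to the non-compact $\tilde G$ in step two: the principal orbit type theorem requires compactness, so one genuinely needs the Kempf--Ness machinery (symmetry of the stabilizer of a critical state, the polar decomposition argument, and the uniqueness clause in Theorem~\ref{KN}(2)) to ``rigidify'' $\tilde G_\phi$ down to $\tilde K_\phi$ for critical $\phi$, and then the density of $\tilde G\cdot Crit(\mH_n)$ to spread this over all of $\mH_n$. Making the measure-theoretic bookkeeping precise — in particular checking that the $\tilde G$-saturation of a full-measure set of critical states is again full measure, given that the critical set itself is measure zero inside its saturation — is the part that needs the most care.
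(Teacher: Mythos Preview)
Your first step contains a genuine error: the claim that $\tilde K_{|L_n\ra}=\{I\}$ is false. The paper itself notes (at the end of Appendix~A) that $g^{\otimes n}$ with $g=\mathrm{diag}(e^{2\pi i/(n-1)},1)\in U(2)$ lies in $\tilde G_{|L_n\ra}$; since this $g$ is unitary, $g^{\otimes n}\in\tilde K_{|L_n\ra}$, and it is not the identity because $g^{\otimes n}|0\cdots 0\ra=e^{2\pi i n/(n-1)}|0\cdots 0\ra\neq|0\cdots 0\ra$. Your ``more carefully'' argument breaks at the phrase ``which must be $1$'': writing $u=e^{i\alpha}v$ with $v\in K$, the relation $u|L_n\ra=|L_n\ra$ only gives $v|L_n\ra=e^{-i\alpha}|L_n\ra$, and nothing forces $e^{-i\alpha}=1$. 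Knowing $K_{|L_n\ra}=\{I\}$ says only that the \emph{vector} stabilizer in $K$ is trivial; it does not rule out $K$-elements sending $|L_n\ra$ to a nontrivial phase multiple of itself. Consequently you have not exhibited any state with trivial $\tilde K$-stabilizer, so the principal orbit type theorem applied to $\tilde K\curvearrowright\mH_n$ does not yield the trivial type, and the whole bootstrap collapses.

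This is exactly why the paper does \emph{not} work with $\tilde K$ directly. It first applies the principal orbit type theorem to the compact group $K$ acting on the connected submanifold $\mC$ of critical states with finite stabilizer (Lemma~\ref{properties}), using $|L_n\ra$ as witness (where $K_{|L_n\ra}=\{I\}$ \emph{is} true), to get a generic subset $\mC'\subset\mC$ with trivial $G$-stabilizer, and then saturates to $\mA'=G\mC'$. The passage from $G$ to $\tilde G$ is handled separately by SLIPs: writing $\tilde g=tg$ with $g\in G$, one uses $f_2$ for even $n$ (giving $t^2=1$) and both $f_4$ and $f_6$ for odd $n$ (giving $t^4=t^6=1$, hence $t^2=1$) to force $t=\pm 1$ and hence $\tilde g\in G$. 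Your step~2 idea that $\tilde G_\phi=\tilde K_\phi$ for critical $\phi$ with finite stabilizer is in fact correct and could replace part of this SLIP argument, but it does not help you unless you first produce a state with trivial $\tilde K$-stabilizer, and $|L_n\ra$ is not one.
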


Before we outline the proof of the theorem (for details of the proof see Appendix C), we point out that for the compact Lie group $K$ the result already follows from Lemma \ref{trivialstab} and the principle orbit type theorem. To see that, apply Theorem~\ref{principal} with $C=K$ and $\mM=\mH_{n}$. The theorem first states that there exists a state $|\phi\ra\in\mH_n$ such that $K|\phi\ra$ is a principle orbit type. However, we know this already from Lemma \ref{trivialstab}; simply take $|\phi\ra=|L_n\ra$. Due to Lemma~\ref{trivialstab} we have  $K_{\phi}=\{I\}$. Therefore, $K_{\phi}$ is a subgroup of $K_\psi$ for all $|\psi\ra\in\mH_n$. That is, $K|\psi\ra\prec_{type}K|\phi\ra$ for all $|\psi\ra\in\mH_n$. Hence, $K|\phi\ra$ is a principle orbit type. The second part of Theorem~\ref{principal} states that the set of states $\mA$, such that $K|\psi\ra$ is a maximal orbit type for all $|\psi\ra\in\mA$ (or equivalently $\mA$ is the set of all states with trivial stabilizer in $K$), is open, dense, and of full measure in $\mH_n$.

Unlike the group $K$ (or $\tilde{K}$), the groups $G$ and $\tilde{G}$ are not compact, and therefore the principle orbit type theorem cannot be applied directly in these cases. For this reason, we first define the following set of states in $\mH_n$:
\be\label{C0}
\mC\eqdef\left\{\psi\in Crit(\mH_n)\;\Big|\;\dim (G|\psi\ra)=\dim(G)\right\}
\ee
That is, $\mC$ consists of all critical states whose orbits (under $G$) have maximal dimension (i.e. the dimension of $G$). In particular, due to the identity $G|\psi\ra \cong G/G_\psi$ it follows that $|\psi\ra\in\mC$ if and only if $|\psi\ra$ is critical and $G_\psi$ is a finite group (or equivalently $\dim(G_{\psi})=0$). The set $\mC$ has several key properties which we summarize in the following lemma and which we need for the proof of Theorem~\ref{mtheorem}.

\begin{lemma}\label{properties}
The set $\mC$ defined in Eq.(\ref{C0}) has the following properties:
\begin{enumerate}
\item $G_\psi=K_\psi$ for all $|\psi\ra\in\mC$.
\item The set $G\mC \eqdef \left\{g|\psi\ra\;|\;g\in G\;;\;|\psi\ra\in\mC\right\}$ is open with complement of lower dimension in $\mH_n$.
\item $\mC$ is a connected smooth submanifold of $\mH_n$, and $K$ acts differentiably on $\mC$.
\end{enumerate}
\end{lemma}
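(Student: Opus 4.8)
The plan is to establish the three properties in an order that lets later items build on earlier ones. For property (1), the containment $K_\psi\subseteq G_\psi$ is automatic since $K\subset G$; the content is the reverse. Let $\psi\in\mC$ and $g\in G_\psi$. Since $\psi$ is critical, its stabilizer $G_\psi$ is symmetric (the result proved in Appendix A and quoted in the Preliminaries), so $g^\dagger\in G_\psi$ as well, hence $g^\dagger g\in G_\psi$. Now $g^\dagger g$ is positive definite, so by part 2 of the Kempf--Ness theorem applied to the critical state $\psi$, the equality $\|g^\dagger g\,\psi\|=\|\psi\|$ (which holds because $g^\dagger g\,\psi=\psi$) forces $g^\dagger g=I$ on the relevant subspace; more carefully, polar-decompose $g=uh$ with $u$ unitary (in $K$, since $g\in G$ acts factorwise and the polar parts of $SL(2,\mathbb{C})$ factors lie in $SU(2)$ and positive-definite $SL(2,\mathbb{C})$) and $h=|g|$ positive definite. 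Then $h^2=g^\dagger g\in G_\psi$, and by the ``moreover'' clause of Kempf--Ness part 2, $h^2\psi=\psi$ implies $h^2=I$ hence $h=I$, so $g=u\in K$. This gives $G_\psi\subseteq K_\psi$.

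For property (2), I would argue that $G\mC$ is exactly the set of $\psi\in G\cdot Crit(\mH_n)$ whose $G$-orbit has maximal dimension $\dim G$. The set $G\cdot Crit(\mH_n)$ is open with complement of lower dimension (this is the cited fact from~\cite{GW13} recalled in the Preliminaries). Within it, the function $\psi\mapsto\dim(G|\psi\ra)$ is lower semicontinuous (orbit dimension can only drop on limits / special subvarieties), so the locus where it attains the maximal value $\dim G$ is open; and it is nonempty because $|L_n\ra\in\mC$ by Lemma~\ref{trivialstab}. The complement, where the orbit dimension is strictly smaller, is a proper closed algebraic subset, hence of lower dimension. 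One must also check that $G\mC$ really equals this maximal-orbit-dimension locus inside $G\cdot Crit(\mH_n)$: if $\dim(G|\psi\ra)=\dim G$ and $\psi\in G\cdot Crit$, then the orbit closure contains a critical state $\phi$ (Kempf--Ness part 3 applied to the closed orbit — here one uses that full-dimensional orbits of reductive groups are closed, or passes to the unique closed orbit in the closure), and $\dim(G|\phi\ra)=\dim(G|\psi\ra)=\dim G$ as well since $\phi$ is in the same orbit (the full-dimensional orbit is already closed), so $\phi\in\mC$ and $\psi\in G\phi\subseteq G\mC$.

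For property (3), smoothness of $\mC$: $\mC$ is cut out inside $Crit(\mH_n)$ by the maximal-rank condition on the differential of the orbit map, which is an open condition, so $\mC$ is an open subset of the real-algebraic (indeed smooth, away from a lower-dimensional bad locus) variety $Crit(\mH_n)$; I would cite the standard fact (e.g.\ from~\cite{GW13}) that $Crit(\mH_n)$ is a smooth manifold in the region of full-dimensional orbits, so $\mC$ inherits smoothness. That $K$ acts on $\mC$ differentiably is immediate: $K\subset G$ preserves $Crit(\mH_n)$ (critical states are $K$-invariant since $K$-action preserves norms and the critical condition), preserves the norm, and preserves orbit dimension (conjugation $K_{g\psi}=gK_\psi g^{-1}$), so $K$ maps $\mC$ to itself, and the action is the restriction of the smooth linear $K$-action on $\mH_n$. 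The one genuinely delicate point is \emph{connectedness} of $\mC$: I would deduce it from the fact that $G\mC$ is open, dense, and connected (it is a dense open subset of the connected manifold $\mH_n$), that the orbit map $G\times\mC\to G\mC$ is a submersion onto this connected set with $G$ connected, and that therefore $\mC$ — being the image of a fiber-type slice — must be connected; alternatively, realize $\mC$ as (an open subset of) a single $G$-orbit's worth of critical representatives and use that the ``critical points of the norm on a closed orbit'' form a single $K$-orbit (Kempf--Ness part 2), which is connected since $K$ is connected, then patch over the connected parameter space of SLOCC classes.

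The main obstacle I anticipate is property (2), specifically the verification that $G\mC$ is precisely the full-dimensional-orbit locus inside $G\cdot Crit(\mH_n)$ and that its complement has lower dimension rather than merely empty interior — this requires knowing that full-dimensional $G$-orbits are closed (so they contain their own critical points) and a clean semicontinuity argument for orbit dimension, both of which are standard in geometric invariant theory but need to be invoked carefully for the non-compact reductive group $G=SL(2,\mathbb{C})^{\otimes n}$.
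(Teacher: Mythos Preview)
Your argument for property~(1) contains a concrete error: you claim that the ``moreover'' clause of Kempf--Ness part~2 gives $h^2\psi=\psi\Rightarrow h^2=I$, but that clause only says that for positive definite $g$ the norm equality $\|g\phi\|=\|\phi\|$ is equivalent to $g\phi=\phi$; it never asserts $g=I$. The missing step is precisely where the hypothesis $\psi\in\mC$ (i.e.\ $G_\psi$ \emph{finite}) enters: since $h^2\in G_\psi$ is positive definite and of finite order, it must be the identity. Without invoking finiteness you have not used the defining property of $\mC$ at all, and indeed for a general critical state $G_\psi$ can be strictly larger than $K_\psi$.

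Your route to property~(2) is genuinely different from the paper's and in fact cleaner. The paper constructs an explicit SLIP $f$ separating the closed orbit $G|L_n\rangle$ from the low-dimension locus $\mZ$ via the Reynolds operator and Hilbert's Nullstellensatz, and then identifies $G\mC$ with $\mH_f$. You instead intersect the known open-with-lower-dimensional-complement set $G\cdot Crit(\mH_n)$ with the Zariski-open maximal-orbit-dimension locus, which avoids the SLIP construction entirely. However, your verification that $G\mC$ equals this intersection is overcomplicated and invokes a false general principle (``full-dimensional orbits of reductive groups are closed''---consider $\mathbb{C}^*$ acting on $\mathbb{C}$). The correct and trivial observation is that $\psi\in G\cdot Crit$ \emph{by definition} means $\psi=g\phi$ with $\phi$ critical, so $\phi\in G\psi$ directly and $\dim G\phi=\dim G\psi$; no closedness argument is needed.

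For property~(3) your proposal is too vague to constitute a proof. For smoothness you defer to a ``standard fact''; the paper instead computes the differentials $(df_j)_\psi(\phi)=-2i\,\omega(X_j\psi,\phi)$ of the defining equations $f_j(\psi)=\langle\psi|X_j|\psi\rangle$ and uses non-degeneracy of $\omega$ together with linear independence of $\{X_j\psi\}$ (which holds exactly when $\dim G\psi=\dim G$) to apply the implicit function theorem. For connectedness, neither of your sketches works as stated: that $G\times\mC\to G\mC$ is a submersion onto a connected set does not force $\mC$ connected, and ``patching over the connected parameter space of SLOCC classes'' presupposes what is to be proved. The paper's argument is short and worth knowing: if $\mC=\mC_1\sqcup\mC_2$ with both open and nonempty, then $G\mC_1$ and $G\mC_2$ are open in the connected dense set $G\mC$ and hence intersect, giving $g\psi=\phi$ with $\psi\in\mC_1$, $\phi\in\mC_2$; Kempf--Ness then upgrades $g$ to some $k\in K$, and connectedness of $K$ yields a path $t\mapsto\sigma(t)\psi$ inside $\mC$ joining $\psi$ to $\phi$, a contradiction.
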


Property~(1) was first proved in Proposition 5 of~\cite{GoWa11} and implies that when restricting ourselves
to $\mC$ we can restrict the analysis to the compact group $K$ instead of the non-compact group $G$.
Property~(2) is not trivial and implies that the set of all states that are \emph{not} in $G\mC$ is of measure zero.
Finally, Property~(3) can be divided into two parts:
the first one is that $\mC$ is a submanifold of $\mH$. This property follows essentially from the implicit function theorem of multi-variable calculus. The second part, that $\mC$ is connected follows from the connectedness of $K$.
With this Lemma at hand, we are now ready to outline the proof of Theorem~\ref{mtheorem}. We leave the details of the proof of Lemma~\ref{properties} and of Theorem~\ref{mtheorem} to the appendix (see Appendix B and C).

{\it Outline of the proof of Theorem~\ref{mtheorem}:}
We first show that there exists an open set, with complement of lower dimension, $\mA'$ in $\mH_n$, such that for all $|\psi\ra\in\mA'$, $G_\psi=\{I\}$ is trivial, and then use it to show that there also exists a subset $\mA\subset\mH_n$ that is open, dense, and of full measure in $\mH_n$ for which
$\tilde{G}_\psi=\{I\}$ for all $|\psi\ra\in\mA$.

In order to do so, we first apply the principle orbit type theorem to $\mM$ being the connected smooth submanifold $\mC\subset\mH_n$ (see property~3 of Lemma~\ref{properties}),
and the compact Lie group $C=K$. Due to the existence of a state $\ket{L_n} \in \mC$ which has trivial stabilizer in $G$ (and therefore in $K$), and due to properties~1 and 2 of Lemma~\ref{properties}, we have that the set $\mC'= \left\{|\psi\ra\in\mC\;\big|\;G_\psi=\{I\}\right\}$ is open and dense in $\mC$.
The set $\mA'\eqdef G\mC'=\left\{g|\psi\ra\;\big|\;|\psi\ra\in\mC'\;;\;g\in G\right\}$ is therefore open and dense in $G\mC$. Clearly, any state $\ket{\phi}$ in $G\mC$ has a trivial stabilizer. Using now that $G\mC$ is open with complement of lower dimension in $\mH_n$ (see property~2 in Lemma~\ref{properties}), we also have that $\mA'$, which contains only states with $G_\psi=\{I\}$ is open with complement of lower dimension in $\mH_n$. In order to define the set $\mA$ containing states with trivial stabilizer in $\tilde{G}$ (not only $G$) which is also open and with complement of lower dimension in $\mH_n$, one can use the SLIPs introduced above. With these SLIPs we are able to identify a subset $\mA\subset\mA'$ with the desired properties, such that for any state $\ket{\phi}\in \mA$, $g\in \tilde{G}_\phi$ only if $g\in G_\phi$ (and therefore trivial). The set $\mA$ has now the desired properties, which completes the proof.

Let us now discuss the consequences of Theorem~\ref{mtheorem} in the context of entanglement theory. We will call a $n$--qubit state `$n$-way entangled' if it cannot be written as a product state between one qubit and the rest of the $n-1$ qubits. Regarding pure state transformations, Theorem~\ref{mtheorem} implies the following theorem.
\begin{theorem} \label{ThnoDet}
Let $|\psi\ra,|\phi\ra\in\mH_n$ be two $n$-way entangled states and suppose also that the stabilizer $\tilde{G}_\psi=\{I\}$ is trivial. Then, $|\psi\ra$ can be converted deterministically to $|\phi\ra$ by LOCC or SEP operations if and only if $|\psi\ra$ and $|\phi\ra$ are LU--equivalent, i.e. there exists $u\in \tilde{K}$ such that
$|\psi\ra=u|\phi\ra$.
\end{theorem}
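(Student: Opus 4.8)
The plan is to reduce the statement about deterministic LOCC (and SEP) transformations to a statement about the stabilizer, using the known characterization of SEP transformations in terms of stabilizer elements. First I would recall the result of \cite{GoWa11} (and its refinements): if $|\psi\ra$ can be converted to $|\phi\ra$ by a separable operation, then there exist Kraus operators $A_k = g_k$ with $g_k \in \tilde G$, together with probabilities $p_k > 0$ and local unitaries, such that $g_k |\psi\ra = \sqrt{p_k}\, u_k |\phi\ra$ for each $k$, and $\sum_k g_k^\dagger g_k = \one$. Since LOCC $\subset$ SEP, it suffices to prove the ``only if'' direction for SEP; the ``if'' direction is immediate because any LU transformation is a (deterministic, one-outcome) LOCC protocol. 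So the real content is: given the SEP data above with $\tilde G_\psi = \{I\}$, conclude that $|\psi\ra$ and $|\phi\ra$ are LU-equivalent.

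The key step is to exploit triviality of $\tilde G_\psi$ to show the transformation is essentially rigid. Here is the route I would take. From $g_k|\psi\ra = \sqrt{p_k}\, u_k|\phi\ra$ we get, for any two outcomes $k,\ell$, that $g_\ell^{-1} u_\ell^\dagger u_k g_k$ (suitably normalized) stabilizes $|\psi\ra$, hence equals $I$ by hypothesis; this forces all the $g_k$ to be scalar multiples of a single fixed $g \in \tilde G$ composed with local unitaries, i.e.\ effectively there is only one ``direction'' in the instrument. Concretely one shows $|\psi\ra$ and $|\phi\ra$ lie on the same $\tilde G$-orbit, say $|\phi\ra = h|\psi\ra$ for some $h \in \tilde G$, and the normalization constraint $\sum_k g_k^\dagger g_k = \one$ collapses to $h^\dagger h = \one$ up to a positive scalar (using that each $g_k$ is a scalar times $u_k h u_0^\dagger$ or similar). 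Then $h$ is proportional to a unitary, giving the LU-equivalence $|\phi\ra = u|\psi\ra$ with $u \in \tilde K$. I would also need to invoke the $n$-way entanglement hypothesis to rule out degenerate cases where a local factor acts trivially or where a reduced state is pure, which is what guarantees the relevant maps are genuinely invertible on the support and that ``$\tilde G$'' rather than some larger effective group is the right object; and I may use that for a critical-orbit state the stabilizer is symmetric (Appendix A) if I first move to the critical representative, though with $\tilde G_\psi$ already trivial this is not strictly necessary.

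The main obstacle I anticipate is handling the case where $|\psi\ra$ and $|\phi\ra$ are \emph{not} on the same $\tilde G$-orbit, i.e.\ when the SEP map is not invertible. In that situation one cannot write $g_k|\psi\ra = \sqrt{p_k}u_k|\phi\ra$ with all $g_k$ invertible in $\tilde G$; some Kraus operators may be singular, and one must argue that such a protocol cannot be deterministic unless $|\phi\ra$ is actually reachable on the orbit anyway. The clean way around this is presumably to use an SLIP: pick a homogeneous SL-invariant polynomial $f$ (degree $2$ for even $n$, or the degree-$4$ polynomial $f_4$ for odd $n$) that is nonzero on $|\psi\ra$ — which one may assume after intersecting with the full-measure set where such invariants are nonvanishing, or argue directly from $\tilde G_\psi=\{I\}$ that $|\psi\ra$ lies in a closed (critical) orbit — and observe that $f$ is constant on $G$-orbits and transforms by a determinant character under $\tilde G$; applying $f$ to the Kraus relation and summing against $\sum_k g_k^\dagger g_k = \one$ forces each singular $g_k$ to contribute zero weight, so the instrument is supported on invertible elements after all. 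Once invertibility is secured, the rigidity argument above closes the proof.
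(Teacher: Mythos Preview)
Your core argument is right and matches the paper: use the Kraus decomposition of a SEP map, exploit $\tilde G_\psi=\{I\}$ to force all nonzero-weight Kraus operators to be scalar multiples of a single $g\in\tilde G$, and then use the completeness relation to conclude $g^\dagger g=I$. However, you overcomplicate the treatment of singular Kraus operators, and your proposed SLIP route would require extra hypotheses not present in the theorem (nonvanishing of $f_2$ or $f_4$ on $|\psi\ra$, or that $|\psi\ra$ lies in a closed orbit --- neither follows from $\tilde G_\psi=\{I\}$ and $n$-way entanglement alone). The sentence ``applying $f$ to the Kraus relation and summing against $\sum_k g_k^\dagger g_k=\one$'' does not type-check as written, and this is precisely the step you would need to make rigorous.

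The paper's resolution closes this gap with no extra machinery. Write $M_j|\psi\ra=c_j|\phi\ra$ for each Kraus operator. If $c_j\neq 0$, the $n$-way entanglement of $|\phi\ra$ forces every tensor factor of $M_j$ to have full rank, so $M_j\in\tilde G$; then $\frac{1}{c_j}g^{-1}M_j\in\tilde G_\psi=\{I\}$ gives $M_j=c_j g$. If $c_j=0$, then simply $M_j|\psi\ra=0$. Now apply the completeness relation \emph{to the vector} $|\psi\ra$:
\[
|\psi\ra=\sum_j M_j^\dagger M_j|\psi\ra
=\Big(\sum_{j:\,c_j\neq 0}|c_j|^2\Big)g^\dagger g|\psi\ra
=g^\dagger g|\psi\ra,
\]
since the $c_j=0$ terms annihilate $|\psi\ra$ and $\sum_j|c_j|^2=1$. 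Hence $g^\dagger g\in\tilde G_\psi=\{I\}$, so $g\in\tilde K$. No SLIPs, no orbit-closure argument, no comparison of pairs $g_k,g_\ell$, and no need to establish SLOCC-equivalence of $|\psi\ra$ and $|\phi\ra$ beforehand --- that falls out from the first $c_j\neq 0$.
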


The proof of this theorem is straightforward and is presented in Appendix D. Hence, under both SEP and LOCC, deterministic transformations of the form $|\psi\ra\to|\phi\ra$ are not possible for two states in $\mA$ unless $|\psi\ra$ and $|\phi\ra$ are LU--equivalent. It is therefore crucial to determine what is the maximum possible probability with which the transformation $|\psi\ra\to|\phi\ra$ can be achieved by LOCC, which is stated in the following theorem.
\begin{theorem} \label{ThmaxProb}
Let $|\psi\ra\in\mH_n$ be a normalized $n$-way entangled state with $\tilde{G}_\psi=\{I\}$. Let $|\phi\ra=g|\psi\ra$ be a normalized state in the SLOCC orbit $\tilde{G}|\psi\ra$. Then, the maximum probability with which $|\psi\ra$ can be converted to $|\phi\ra$ by LOCC or SEP is given by:
\be\label{pmax1}
p_{\max}(|\psi\ra\to|\phi\ra)=\frac{1}{\lambda_{\max}(g^{\dag}g)},
\ee
where $\lambda_{\max}(X)$ denotes the maximal eigenvalue of $X$.
\end{theorem}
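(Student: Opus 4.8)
The plan is to prove the two inequalities $p_{\max}\le 1/\lambda_{\max}(g^\dagger g)$ and $p_{\max}\ge 1/\lambda_{\max}(g^\dagger g)$ separately, exploiting the triviality of $\tilde G_\psi$ to pin down the structure of every SEP map that could implement the transformation. First I would recall the general characterization of SEP transformations $|\psi\ra\to|\phi\ra$: such a transformation is possible with probability $p$ if and only if there is a finite collection of operators $A_k=A_k^{(1)}\otimes\cdots\otimes A_k^{(n)}$ with $\sum_k A_k^\dagger A_k = I$ (a valid POVM), and for each $k$ with nonzero weight $A_k|\psi\ra \propto |\phi\ra$, with the success branches contributing total weight $p$. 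Since $|\phi\ra=g|\psi\ra$ and $\tilde G_\psi=\{I\}$, the condition $A_k|\psi\ra = c_k\, g|\psi\ra$ forces $g^{-1}A_k|\psi\ra = c_k|\psi\ra$, hence $c_k^{-1}g^{-1}A_k\in\tilde G_\psi=\{I\}$, so $A_k = c_k\, g$ on the nose (as operators, not just on $|\psi\ra$, once we note that a local operator annihilating an $n$-way entangled state must be trivial — or more simply, that $g^{-1}A_k$ is a local operator fixing $|\psi\ra$). Thus every successful Kraus operator is a scalar multiple of $g$ itself.

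With this rigidity in hand, the upper bound is immediate: write $p = \sum_{k\in\text{success}} \||A_k|\psi\ra\|^2 \big/ \||\phi\ra\|^2$ suitably normalized; since $A_k = c_k g$, the success part of the POVM condition reads $\big(\sum_k |c_k|^2\big) g^\dagger g \preceq I$, i.e. $p\, g^\dagger g \preceq I$ after identifying $p=\sum_k|c_k|^2$ (using that $|\phi\ra=g|\psi\ra$ is normalized so the branch probabilities are $|c_k|^2\|g|\psi\ra\|^2 = |c_k|^2$). The operator inequality $p\,g^\dagger g\preceq I$ is equivalent to $p\,\lambda_{\max}(g^\dagger g)\le 1$, giving $p\le 1/\lambda_{\max}(g^\dagger g)$. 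For the lower bound (achievability), I would construct an explicit two-outcome LOCC protocol. Let $\lambda=\lambda_{\max}(g^\dagger g)$ and set $A_1 = g/\sqrt\lambda$, so $A_1^\dagger A_1 = g^\dagger g/\lambda \preceq I$. Then $B\eqdef I - A_1^\dagger A_1 \succeq 0$ is a valid complementary POVM element; the issue is whether the corresponding instrument can be realized by LOCC. Here I would invoke the standard fact (used throughout the deterministic-transformation literature, e.g. in the references on SEP versus LOCC) that a single-branch "filter" $A_1=\otimes_i A_1^{(i)}$ with $A_1^{(i)}$ achievable as one outcome of a local POVM can be implemented by each party performing the two-outcome POVM $\{A_1^{(i)\dagger}A_1^{(i)}/\|A_1^{(i)}\|^2_{\text{op-scaled}},\,\cdot\,\}$ in sequence — more carefully, one shows that a one-successful-branch SEP map whose single Kraus operator is a product operator is automatically LOCC (indeed implementable with one round). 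After the successful branch, $A_1|\psi\ra = g|\psi\ra/\sqrt\lambda = |\phi\ra/\sqrt\lambda$, which upon normalization is exactly $|\phi\ra$, and this branch occurs with probability $\|A_1|\psi\ra\|^2 = 1/\lambda$. Combined with the upper bound, $p_{\max} = 1/\lambda_{\max}(g^\dagger g)$ for both LOCC and SEP.

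The main obstacle I anticipate is the achievability direction: making rigorous that the filtering operation $A_1=g/\sqrt{\lambda_{\max}(g^\dagger g)}$, which is a product operator whose local factors $A_1^{(i)}=g_i/(\text{suitable scalar})$ are contractions, can actually be realized by a genuine LOCC protocol (and not merely SEP), so that the formula holds for LOCC and not only the a priori larger class SEP. This is a known construction — each party can, conditioned on earlier parties' success announcements, apply a two-outcome local measurement whose "keep" outcome is proportional to $g_i$, normalizing the scalars so that the product of success amplitudes matches $1/\sqrt{\lambda_{\max}(g^\dagger g)}$ — so I would cite the relevant prior work on local implementation of product filters rather than reprove it. A secondary technical point is justifying the step $g^{-1}A_k = c_k\, I$ as an operator identity: this uses that $g^{-1}A_k$ is again a local operator (product of $2\times2$ matrices) and fixes the $n$-way entangled state $|\psi\ra$ up to the scalar $c_k$, hence by $\tilde G_\psi=\{I\}$ equals $c_k I$; one should be slightly careful when some $A_k^{(i)}$ is singular, but on the success branches $A_k|\psi\ra\ne0$ forces all local factors invertible, so $g^{-1}A_k\in\tilde G$ is legitimate.
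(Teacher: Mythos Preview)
Your proposal is correct and follows essentially the same route as the paper: the upper bound comes from the rigidity argument $A_k=c_kg$ forced by $\tilde G_\psi=\{I\}$, yielding $p\,g^\dagger g\preceq I$, and the lower bound is achieved by the explicit local two-outcome filter $N_{0|j}=g_j/\sqrt{\lambda_{\max}(g_j^\dagger g_j)}$ on each site (exactly the one-round LOCC protocol you allude to). The paper spells out that local construction explicitly rather than citing it, but otherwise your argument matches theirs, including the observation that invertibility of the successful Kraus operators follows from $|\phi\ra$ being $n$-way entangled.
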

From the proof of this theorem, which is presented in Appendix D, it will also follow that the RHS of~\eqref{pmax1} always provides a lower bound on the maximal probability even if $|\psi\ra\notin\mA$. However, if the stabilizer of $|\psi\ra$ is not trivial, $g$ is not unique, and the lower bound given by the RHS of~\eqref{pmax1} can be improved by maximizing it over all $g\in\tilde{G}$ that satisfies $|\phi\ra=g|\psi\ra$.

\section{Conclusions}
In summary, we have shown that the set of states describing more than four qubits with trivial stabilizer is dense, and of full measure in the Hilbert space. That is, for almost every $n$--qubit state with $n>4$ there exists no non--trivial local symmetry of the state. We used this result to prove that among states in this full measure set there is no non-trivial transformation possible, i.e. other then LUs. Hence, these states can only be transformed probabilistically into each other via local operations. We also determined the maximal success probability for these transformations, which can be easily computed.

Note that these results also imply that there are only very mild conditions on a function to be an entanglement measure of pure states, as any such function only has to obey that it is non--increasing under pure state LOCC--transformations. Furthermore, the MES of $n$--qubit states with $n>4$ is of full measure. The identification and characterization of convertible states, as well as the higher dimensional case, will be part of a future work. We believe that the results and the tools presented here might be very useful to investigate related problems and might also lead to important consequences in other fields of physics.

\section{Acknowledgments}   G.G. research is supported by NSERC. B. K is acknowledging support of the Austrian Science Fund (FWF): Y535-N16.  

{\it Note added-} We became aware after the publication of this paper that the state $|L_n\rangle$ appeared first in~\cite{Ost2005}.

\begin{appendix}

\section*{Appendix}

We present here all the detailed proofs of the theorems and lemmata presented in the main text. In Appendix A we analyze the stabilizer in $\tilde{G}$ of critical states and show that the stabilizer (in $G$) of the $\ket{L_n}$ state is trivial. In Appendix B we present the proof of Lemma \ref{properties}, where the properties of the set $\mC$ of critical states with finite stabilizer (in $G$) are stated. This lemma is then used in Appendix C to prove the main theorem of the paper, namely that almost all $n$--qubit states with $n>4$ have a trivial stabilizer (in $\tilde{G}$). In Appendix D we prove in detail the direct consequences of the main theorem of the paper in the context of multipartite entanglement theory.

\section*{Appendix A: Stabilizers of critical states and the $\ket{L_n}$ state}
\label{sectionA}

In this appendix we investigate the stabilizer of critical states and of the $\ket{L_n}$ state. We first show that the stabilizer of any critical state is symmetric, that is if $g\in \tilde{G}_\psi$ then $g^\dagger \in \tilde{G}_\psi$ (see Theorem \ref{kncor} below). We then use this result to show that the stabilizer of the critical $\ket{L_n}$ state presented in the main text (see Eq. (\ref{eq_Lstate})) is trivial.

\subsection*{The stabilizer of critical states}

The following theorem is a consequence of the Kemp-Ness theorem on the stabilizer subgroups of $\tilde{G}$.
\begin{theorem}\label{kncor}
Let $|\psi\ra\in Crit(\mH_n)$ be any critical state, and let $g\in\tilde{G}_{\psi}$. Then, also $g^{\dag}\in\tilde{G}_\psi$.
That is,  $\tilde{G}_\psi$ is invariant under the adjoint.
\end{theorem}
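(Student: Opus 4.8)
The plan is to use the polar decomposition in $\tilde G$ together with part~2 of the Kempf--Ness theorem. Write $g\in\tilde G_\psi$ in polar form $g=up$, where $u\in\tilde K$ is unitary and $p=(g^\dagger g)^{1/2}$ is positive definite; note that since each factor of $g$ lies in $GL(2,\C)$, each corresponding factor of $u$ lies in $U(2)$ and each factor of $p$ is a positive-definite $2\times 2$ matrix, so indeed $u\in\tilde K$ and $p$ is a local positive-definite operator. The first step is to reduce from $\tilde G$ to $G$ by peeling off scalars: write $p=\alpha p_0$ and absorb phases so that $p_0\in G$ is positive definite and $u_0=\beta u\in K$; tracking the scalar factors (using e.g.\ $|\det|$ on each tensor factor, or simply $\|g\psi\|=\|\psi\|$) forces the leftover scalars to have modulus one, which is enough for the argument below. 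So it suffices to treat $g=u_0p_0$ with $u_0\in K$, $p_0\in G$ positive definite, $g\psi=\psi$.

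Next I would exploit criticality. Since $\|g\psi\|=\|\psi\|$ and $\psi$ is critical, part~2 of Theorem~\ref{KN} gives $g\psi\in K\psi$; but more usefully, apply the norm inequality to the positive-definite part directly. From $u_0p_0\psi=\psi$ and unitarity of $u_0$ we get $\|p_0\psi\|=\|\psi\|$. Since $p_0\in G$ is positive definite and $\psi$ is critical, the ``moreover'' clause of part~2 of the Kempf--Ness theorem yields $p_0\psi=\psi$. Plugging this back in gives $u_0\psi=\psi$ as well, so $g$ has been split as a product of two commuting(-on-$\psi$) stabilizing elements, a unitary $u_0\in K_\psi$ and a positive-definite $p_0\in G_\psi$. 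Then $g^\dagger$, up to the unimodular scalar bookkeeping from the first step, equals $p_0 u_0^{-1}$ (or $p_0u_0^\dagger$), and both $p_0$ and $u_0^{-1}$ are in $\tilde G_\psi$; since $\tilde G_\psi$ is a group it contains their product, hence $g^\dagger\in\tilde G_\psi$. Undoing the scalar normalization at the end is immediate because the scalars had modulus $1$ and the conjugate of a modulus-$1$ scalar is its inverse, again unimodular.

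The main obstacle I anticipate is not the Kempf--Ness input — that is essentially handed to us — but the careful handling of the $GL$-versus-$SL$ discrepancy: $\tilde G$ contains scalar multiples of the identity on each factor, the Kempf--Ness machinery is stated for $G=SL(2,\C)^{\otimes n}$, and one must check that a $\tilde G$-element fixing $\psi$ genuinely decomposes into an $SL$-positive part fixing $\psi$ and a scalar part that is forced to be trivial (or at worst unimodular and harmless under $\dagger$). This requires a short but attentive computation comparing $\det$ on each tensor factor and using that $g\psi=\psi$ pins down the product of the per-factor scalars. Once that is in place, the rest is a two-line consequence of the positive-definite equality case in part~2 of the Kempf--Ness theorem and closure of $\tilde G_\psi$ under products and inverses.
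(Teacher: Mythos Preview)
Your argument follows the same skeleton as the paper's: take the polar decomposition $g=u\sqrt{g^\dagger g}$, show that the positive part lies in $G$, invoke the equality case in part~2 of the Kempf--Ness theorem to get $\sqrt{g^\dagger g}\,|\psi\rangle=|\psi\rangle$, and read off $g^\dagger|\psi\rangle=|\psi\rangle$. The one substantive difference is how the overall scalar is controlled. The paper appeals to the Hilbert--Mumford theorem to produce a homogeneous SLIP $f$ of some degree $m$ with $f(\psi)\neq 0$; then $f(\psi)=f(g\psi)=t^{m}f(\psi)$ forces $t^{m}=1$, hence $|t|=1$. Your route via the norm stays entirely inside the Kempf--Ness framework and is more elementary: writing $g=tg'$ with $g'\in G$, one has $g'\psi=t^{-1}\psi$ and $(g')^{-1}\psi=t\psi$, and applying part~1 of Kempf--Ness to both $g'$ and $(g')^{-1}$ yields $|t|^{-1}\geq 1$ and $|t|\geq 1$, so $|t|=1$. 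Two small corrections to your sketch: the bare statement ``$\|g\psi\|=\|\psi\|$'' gives only one of the two inequalities, so you do need the two-sided application (equivalently, use that $g^{-1}\in\tilde G_\psi$ as well); and the alternative you mention, ``$|\det|$ on each tensor factor'', does not work by itself, since $g\psi=\psi$ imposes no constraint on the individual $\det g_i$. What your approach buys is avoiding the Hilbert--Mumford input altogether; what the paper's SLIP argument buys is a one-line computation once that theorem is in hand.
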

\begin{proof}
The Hilbert-Mumford theorem (see for example~\cite{Sh94} or in the context of our paper see~\cite{Wa16}) implies that if $|\psi\ra$ is critical, then there exists at least one homogeneous SLIP $f$ of some degree $m$ such that $f(|\psi\ra)\neq 0$. Now, if $g\in\tilde{G}_{\psi}$ we can write it as $g=tg'$, where $0\neq t\in\mathbb{C}$ and $g'\in G$. We therefore have $f(|\psi\ra)=f(g|\psi\ra)=t^mf(g'|\psi)=t^mf(|\psi\ra)$. Since $f(|\psi\ra)\neq 0$ this gives $t^m=1$. Next, using the polar decomposition we have $g=u\sqrt{g^{\dag}g}$ with $u\in\tilde{K}$. Hence,
$\||\psi\ra\|=\|g|\psi\ra\|=\|\sqrt{g^\dag g}|\psi\ra\|$. Moreover, note that $\sqrt{g^{\dag}g}\in G$ since $t$ is just a phase. Therefore, from the second part of the Kemp-Ness theorem above it follows that $\sqrt{g^\dag g}|\psi\ra=|\psi\ra$. Applying again $\sqrt{g^\dag g}$ to both sides of this relation gives $g^\dag g|\psi\ra=|\psi\ra$.
But since $g|\psi\ra=|\psi\ra$ we conclude that $g^\dag|\psi\ra=|\psi\ra$.
\end{proof}

\subsection*{Trivial stabilizer of $\ket{L_n}$}

In this subsection we present the proof of Lemma \ref{trivialstab}, which we restate here in order to improve readability.

{\bf Lemma 3.}
{\it If $n>4$ then the stabilizer of $|L_{n}\ra$ in $G$ is
trivial. That is, if $g\in G$ and $g|L_n\ra=|L_{n}\ra$ then $g=I$, where $I$ is the identity element of $G$.}

\begin{proof}
It will be convenient to work with the unnormalized states $|\ell_n\ra\eqdef\sqrt{2(n-1)}|L_n\ra$ and $|w_n\ra\eqdef\sqrt{n}|W_n\ra$. Note that with these definitions
\be\label{ln}
|\ell_n\ra=|0\ra|w_{n-1}\ra+|1\ra\Big(\sqrt{n-2}|1...1\ra+|0...0\ra\Big).
\ee
Let $g\in G$ and denote
$g=g_{1}\otimes h$, where $g_1\in SL(2,\mathbb{C})$ acts on the first qubit, and $h=g_2\otimes\cdots\otimes g_n$ acts on the remaining $n-1$ qubits. Denote further
\be\label{g111}
g_1=\begin{pmatrix}
a & b \\
c & d
\end{pmatrix}\;.
\ee
Applying $\la 0|\otimes I$ on both sides of the relation $g_1\otimes h|\ell_n\ra=|\ell_n\ra$ gives
\be
|w_{n-1}\ra=h\left[a|w_{n-1}\ra+b\Big(\sqrt{n-2}|1...1\ra+|0...0\ra\Big)\right]\label{f1}
\ee
where we have used Eqs.~(\ref{ln},\ref{g111}). We now show that the above equation implies that $b=0$.

Suppose first that $n\geq 5$ is odd. In this case, $n-1$ is even so we can apply the SLIP $f_2$ on both sides
of the equations above. Applying it to~\eqref{f1}, gives
\be
0=f_{2}\left(h\left[a|w_{n-1}\ra+b\Big(\sqrt{n-2}|1...1\ra+|0...0\ra\Big)\right]\right)
\ee
where we have used the fact that $f_2(|w_{n-1}\ra)=0$. Since $\det(h)=1$ we have
\begin{align}
0&=f_{2}\left(a|w_{n-1}\ra+b\Big(\sqrt{n-2}|1...1\ra+|0...0\ra\Big)\label{wn-1}\right)\nonumber\\
& = 2b\sqrt{n-2}\;,
\end{align}
where the last equality follows from the definition of $f_2$ in terms of the bilinear form and the fact that for $n>3$ we have $\bra{w_{n-1}^\ast}\sigma_y^{\otimes n-1}\ket{w_{n-1}}=\bra{w_{n-1}^\ast}\sigma_y^{\otimes n-1}\ket{0...0}=\bra{w_{n-1}^\ast}\sigma_y^{\otimes n-1}\ket{1...1}=0$.
We therefore conclude that $b=0$.

Suppose now that $n\geq 6$ is even. In this case $n-1$ is odd, so that we can apply the SLIP $f_4$ (see Eq.~\eqref{f4}) to both sides of~\eqref{f1}. Since $f_4(|w_{n-1}\ra)=0$ and recalling that $\det(h)=1$ we obtain
\begin{align}
0&=f_4\left(a|w_{n-1}\ra+b\Big(\sqrt{n-2}|1...1\ra+|0...0\ra\Big)\right)\nonumber\\
& = -(n-2)b^4\;,
\end{align}
where the last equality follows from the definition given in~\eqref{f4} and the fact that $n>4$. Hence, $b=0$ also in this case.

We therefore showed that if $n\geq 5$ then $b=0$. Similarly, since we also have $g^{\dag}|L_n\ra=|L_n\ra$ (see Theorem~\ref{kncor}) we also get $c=0$. Therefore, since $|L_n\ra$ is invariant under permutations we conclude that $g=g_1\otimes g_2\otimes\cdots\otimes g_n$  with all $g_i$ with $i=1,2,...,n$ diagonal. Finally, using the notation:
\be\label{g1}
g_i=\begin{pmatrix}
a_i & 0 \\
0 & d_i
\end{pmatrix}\quad i=1,...,n\;,
\ee
we get
\begin{align}
&|\ell_n\ra =g|\ell_n\ra=\sqrt{n-2}d_1\cdots d_n|1...1\ra\nonumber\\
&\;\;+d_1a_2\cdots a_n|10...0\ra+\cdots+a_1\cdots a_{n-1}d_{n}|0...01\ra
\end{align}
so $d_{1}d_{2}\cdots d_{n}=1$. Moreover, as $det(g_i)=1$ we obtain from $d_1a_2\cdots a_n=a_1 d_2 \cdots a_n$ etc. that $d_1^2=d_i^2$ for any $i=2,...,n$. Combining the equation $d_1a_2\cdots a_n=1$ with $d_{1}d_{2}\cdots d_{n}=1$ yields then $d_1^2=1$. Hence, $d_i^2=1$ and therefore $g=I$ as asserted.
\end{proof}

Note that this lemma can also be proven using the fact that $|L_n\ra$ is an entangled permutational-symmetric state of $n$-qubits, which is neither in the GHZ--class nor in the $W$-class.
For such states, the results presented in~\cite{MaKr10} imply that any local symmetry of it must be of the from $g^{\otimes n}$, with $g\in GL(2)$. Note however that in $\tilde{G}$ the stabilizer is not trivial. In particular, $g^{\otimes n}$, with $g$ being a diagonal matrix with entries $e^{i 2\pi/(n-1)},1$ is contained in $\tilde{G}_{|L_n\ra}$.

\section*{Appendix B: Proof of Lemma \ref{properties}}
\label{sectionB}

We present here the proof of the properties of the set $\mC$, i.e. the set of critical states having a finite stabilizer in $G$ (see Lemma \ref{properties}). Whereas property 1 of Lemma \ref{properties} has been proven in Proposition 5 of \cite{GoWa11}, it remains to prove property 2 and 3. In order to increase readability of this appendix, we state this two properties as two theorems here (see Theorem \ref{propertyii} and Theorem \ref{propertyiii} below).

Whereas we avoided to use the notion of Zariski topology in the main text, we are going to use it in this appendix.
A subset, $X
\subset \C^m$ is called Zariski closed if there exists a set of polynomials, $S\subset \C[x_1,\ldots,x_m]$ such that $X=\C^m(S)=\{x\in \C^m|s(x)=0,\forall s\in S\}$. Here, and in the following $\C[x_1,\ldots,x_n]$ denotes the polynomial ring. That is, Zariski closed sets are defined as the common zeros of a set of polynomials. It can be easily verified that Zariski closed sets obey the axioms of closed sets in a topology, i.e. the whole and the empty set are closed, the union of finitely many closed sets is closed and the intersection (also of infinitely many) closed sets is closed. Note that a Zariski closed set is also closed in the Euclidian topology in $\C^m$. Important facts in the context studied here are that a non--empty Zariski open set is always open and dense in the Euclidian topology on  $\C^m$. As the complement of a Zariski open set is  Zariski closed, any non--empty Zariski open set is of full measure. On the other hand, not every open and dense set is Zariski open.  An algebraic variety is a set of solutions to polynomial equations,
i.e. is $Z$-closed set. An algebraic subvariety is a subset of an algebraic variety which is itself a variety.

Let us also recall here the definition of a ring and an ideal. A ring, is a set $R$ equipped with the two binary operations $+$ and $\cdot$ such that $(R,+)$ is an abelian group and $(R,\cdot)$ is a monoid. A subset ${\cal I} \subset R$  is called an ideal of $R$ if $( {\cal I} , + )$ is a subgroup of $( R , + )$ and $ \forall a\in {\cal I},\forall r\in R$ it holds that $a\cdot r,r\cdot a\in {\cal I}$. To give an example, the set of polynomials vanishing on a subset of $X \subset \C^m$, ${\cal I}_X=\{f\in \C[x_1,\ldots,x_n]|f(x)=0 \forall x\in X\}$, forms an ideal in the polynomial ring $\C[x_1,\ldots,x_m]$. An ideal is called proper ideal if it is strictly smaller than $R$. If ${\cal I} \subset \C[x_1,\ldots, x_n]$ is an ideal then we set $\C^n[{\cal I}]=\{x\in \C^n| f(x)=0 \forall f\in {\cal I}\}$. As commonly used, we write $f_{|X}=0$ for $f$ such that $f(x)=0$ for any $x\in X$, e.g. ${\cal I}_X=\{f\in \C[x_1,\ldots,x_n]|f_{|X}=0 \}$. An ideal ${\cal I} \subset \C[x_1,\ldots, x_n]$ is called radical if whenever there exists a $k\geq 1$ such that $f^k\in {\cal I}$ then $f\in {\cal I}$. The radical, $\sqrt{{\cal I}}$, of an ideal ${\cal I}$ is then defined as $\sqrt{{\cal I}}\equiv\{f \in \C[x_1,\ldots,x_n] | f^k\in {\cal I} \mbox{ for some } k\}$. It is the smallest radical ideal which contains ${\cal I}$.

Hilbert's Nullstellensatz \cite{Hilbert}, which we are going to recall next (for $\C$), provides a fundamental correspondence between (radical) ideals of polynomial rings and algebraic varieties of $\C^{n}$.

\begin{theorem} (Hilbert's Nullstellensatz \cite{Hilbert})
Let ${\cal I} \subset \C[x_1,\ldots,x_n]$ be an ideal of $\C[x_1,\ldots,x_n]$. Then
\bi \item[1] $\C^n({\cal I})=\emptyset$ iff ${\cal I} = \C[x_1,\ldots,x_n]$.
\item[2] ${\cal I}_{\C^n({\cal I})}=\sqrt{{\cal I}}$.
\ei
\end{theorem}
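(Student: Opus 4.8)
The plan is to establish the \emph{weak} Nullstellensatz --- the nontrivial direction of part 1 --- first, and then derive part 2 (and recover part 1 in full) from it using the Hilbert basis theorem, so that the genuine work is localized in one place. As preliminaries I would recall that $\C[x_1,\ldots,x_n]$ is Noetherian, so every ideal is finitely generated, say ${\cal I}=(f_1,\ldots,f_r)$; this is what makes the algebraic manipulation in the last step finite. The ``if'' direction of part 1 is trivial --- if ${\cal I}=\C[x_1,\ldots,x_n]$ then $1\in{\cal I}$ and the constant polynomial $1$ has no zeros, so $\C^n({\cal I})=\emptyset$ --- and the inclusion $\sqrt{{\cal I}}\subseteq{\cal I}_{\C^n({\cal I})}$ in part 2 is equally trivial, since $f^k\in{\cal I}$ forces $f^k$, hence $f$, to vanish on all of $\C^n({\cal I})$.

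The core of the argument is the contrapositive of the ``only if'' direction of part 1: every \emph{proper} ideal ${\cal I}\subsetneq\C[x_1,\ldots,x_n]$ has a common zero. I would embed ${\cal I}$ in a maximal ideal ${\cal M}$ and study the quotient $L=\C[x_1,\ldots,x_n]/{\cal M}$, a field that is finitely generated as a $\C$-algebra. The key input --- \emph{Zariski's lemma} --- is that such a field is a finite extension of $\C$, hence, since $\C$ is algebraically closed, equal to $\C$. Granting this, the images of the coordinates $x_i$ in $L=\C$ are scalars $a_i$, whence ${\cal M}=(x_1-a_1,\ldots,x_n-a_n)$ and $(a_1,\ldots,a_n)\in\C^n({\cal M})\subseteq\C^n({\cal I})$. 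To prove Zariski's lemma I would invoke Noether normalization: $L$ is a finite module over some polynomial subring $\C[y_1,\ldots,y_d]$, and since $L$ is a field and a finite (hence integral) extension of this subring, the subring must itself be a field, forcing $d=0$, so $L$ is finite over $\C$. This is the step I expect to be the main obstacle, and it is the only place where algebraic closedness of $\C$ is really used.

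Finally, for the nontrivial inclusion ${\cal I}_{\C^n({\cal I})}\subseteq\sqrt{{\cal I}}$ in part 2 I would use the \emph{Rabinowitsch trick}. Given $f\in{\cal I}_{\C^n({\cal I})}$, adjoin one new variable $x_{n+1}$ and consider ${\cal J}=(f_1,\ldots,f_r,\,1-x_{n+1}f)\subset\C[x_1,\ldots,x_{n+1}]$. This ideal has no common zero in $\C^{n+1}$: at any common zero of $f_1,\ldots,f_r$ one has $f=0$, so $1-x_{n+1}f=1\neq 0$. Part 1 therefore gives ${\cal J}=\C[x_1,\ldots,x_{n+1}]$, so
\be
1=\sum_{i=1}^r g_i(x_1,\ldots,x_{n+1})f_i + g_0(x_1,\ldots,x_{n+1})\,(1-x_{n+1}f)
\ee
for suitable polynomials $g_0,g_1,\ldots,g_r$. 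Substituting $x_{n+1}=1/f$ and multiplying through by a high enough power $f^N$ to clear denominators yields $f^N=\sum_i\tilde g_i f_i\in{\cal I}$, i.e.\ $f\in\sqrt{{\cal I}}$. As a byproduct this also recovers the ``only if'' direction of part 1: if $\C^n({\cal I})=\emptyset$ then ${\cal I}_{\C^n({\cal I})}=\C[x_1,\ldots,x_n]$, so $1\in\sqrt{{\cal I}}$ and hence $1\in{\cal I}$.
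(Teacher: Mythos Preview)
Your proof is correct and follows the standard textbook route to the Nullstellensatz: weak form via Zariski's lemma (itself via Noether normalization), then the strong form via the Rabinowitsch trick. There is nothing to compare against, however, because the paper does not prove this theorem at all --- it is merely quoted as a classical result with a citation to Hilbert, and then invoked as a black box in the subsequent lemma. So your proposal goes well beyond what the paper does; the paper simply assumes the statement.
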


The first statement means that for any proper ideal there exists a common zero for all the polynomials in the ideal, i.e. the algebraic set $\C^n({\cal I})=\{x \in \C^n | f(x)=0 \forall f\in {\cal I}\}$ is non--empty. Or stated differently the ideal ${\cal I}$ contains the constant polynomial $1$ iff there exists no common zero (in $\C^n$) to all the polynomials in ${\cal I}$. The second statement means that any function which vanishes on $\C^n({\cal I})$ is a root of a polynomial in ${\cal I}$.

Let us also introduce the Reynolds operator, $R$ (see for example~\cite{Wa16}). For a finite group, $H$ and $f\in \C[x_1,\ldots,x_n]$ we have
\bea R_H(f)\equiv \frac{1}{|H|} \sum_{h\in H} h \circ f,\eea
where $h \circ f(x_1,\ldots,x_n)=f(h^{-1}(x_1,\ldots,x_n))$. For a compact Lie group, such as $K$ one defines $R_H$ similarly, where the sum is replaced by an integral. It is easy to check that $R_H(f)$ is $H$--invariant, i.e. $R_H[f(\vec{x})]=R_H[f(h'\vec{x})]$ for $\vec{x}=(x_1,\ldots,x_n)$ and for any $h'\in H$. In order to explain how the Reynold operator is defined for a symmetric, non--compact subgroup of $GL(n)$, let us consider the example of $G$ and use the notation $N=2^n$. $R_G$ is defined as the integral over the compact Lie group $G\bigcap U(N)$. In the example considered here, $G\bigcap U(N)=K$. The fact that $R_G(f)$ is $G$--invariant for any $f\in \C[x_1,\ldots,x_N]$ can be shown as follows. Note that $G=K e^{i Lie(K)}$ (polar decomposition) and that, due to the construction $R_G[f(kx)]=R_G[f(x)]$ for any $k\in K$. Let us define $\mu(Y)=R_G[f(Y x)]-R_G[f(x)]$ for a fixed $x\in \C^N$, which is a polynomial on $M_N(\C)$. Then we have that $\mu(Y)=0$ for any $Y\in K$. Consider now the complex analytic function $\Phi(z)=\mu(k e^{z X})$, where $k\in K, X\in Lie(K)$ are fixed and $z\in \C$. Since $\Phi(t)=0$ for any $t\in \R$ and since a complex analytic function on $\C$ is completely determined by its values on $\R$ we have $\Phi(z)=0$ $\forall z\in \C$. Hence, $\mu(Y)=0$ for any $Y\in G$ and therefore $R_G[f(Y x)]=R_G[f(x)]$ for any $Y\in G$. Thus, we have constructed the Reynolds operator $R_G$, which maps polynomials to $G$--invariant polynomials. We will use this operator in the proof of the following lemma (see \cite{Wa16}), which we need to prove Property~2 of Lemma~\ref{properties}.

\begin{lemma}
Let $A$ and $B$ be two subvariaties of $\mH_n$ that are Zariski closed in $\mH_n$ and are $G$-invariant; that is, if $|\psi\ra\in A$ then $G|\psi\ra\subset A$ (the same holds for $B$). Suppose also that $A\cap B=\emptyset$. Then, there exists a SLIP $f$ such that $f|_{A}=0$ and $f|_{B}=1$.
\end{lemma}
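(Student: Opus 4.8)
The plan is to reduce the problem to the classical fact that the invariant ring separates disjoint closed invariant sets, and then to replace an arbitrary invariant polynomial by an honest SL-invariant one via the Reynolds operator. First I would invoke Hilbert's Nullstellensatz: since $A$ and $B$ are Zariski closed in $\mH_n\cong\C^N$ with $N=2^n$ and $A\cap B=\emptyset$, the ideal ${\cal I}_A+{\cal I}_B$ has no common zero, hence by part~1 of the Nullstellensatz it equals the whole polynomial ring. Therefore $1=p+q$ for some $p\in {\cal I}_A$ and $q\in {\cal I}_B$; equivalently there is a polynomial $h\eqdef q=1-p$ with $h|_A=0$ and $h|_B=1$ (note $p|_B=0$ forces $q|_B=1$). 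This $h$ separates $A$ and $B$ but is in general not $G$-invariant.

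Next I would apply the Reynolds operator $R_G$ constructed above, setting $f\eqdef R_G(h)$. By construction $f$ is $G$-invariant, i.e. $f(g|\psi\ra)=f(|\psi\ra)$ for all $g\in G$, so $f$ is a SLIP (if one wants $f$ homogeneous one can first take the homogeneous component of $h$ of the relevant degree, but homogeneity is not needed for the statement). It remains to check that averaging does not destroy the separation property. For $|\psi\ra\in A$, $G$-invariance of $A$ gives $g^{-1}|\psi\ra\in A$ for every $g$ in the compact group $K=G\cap U(N)$ over which the Reynolds average is taken, hence $h(g^{-1}|\psi\ra)=0$ for all such $g$, and integrating gives $f(|\psi\ra)=0$; thus $f|_A=0$. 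Likewise for $|\psi\ra\in B$ we have $g^{-1}|\psi\ra\in B$ for all $g\in K$, so $h(g^{-1}|\psi\ra)=1$ identically on $K$, and since $R_G$ is normalized (it is an average, $R_G(1)=1$) we get $f(|\psi\ra)=1$; thus $f|_B=1$. This proves the lemma.

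The only genuinely delicate point — and the step I would flag as the main obstacle — is the passage from "$f$ is invariant under the compact group $K$" to "$f$ is invariant under all of $G$." This is exactly the content of the argument already given in the text for $R_G$: write $g\in G$ in polar form $g=ke^{zX}$ with $k\in K$, $X\in Lie(K)$, $z$ promoted to a complex variable, and observe that the function $z\mapsto R_G[f(ke^{zX}|\psi\ra)]-R_G[f(|\psi\ra)]$ is complex-analytic in $z$ and vanishes for all real $z$ (where $ke^{zX}\in K$), hence vanishes identically by the identity theorem. Since I am allowed to assume the construction of $R_G$ and its $G$-invariance as already established in the preceding discussion, in the actual write-up I would simply cite that and concentrate the proof on the two bookkeeping checks $f|_A=0$ and $f|_B=1$, together with the one-line Nullstellensatz reduction. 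A minor subtlety to mention is that $A$ or $B$ could be empty, in which case the statement is trivial (take $f\equiv 1$ or $f\equiv 0$), so one may assume both are nonempty; and one should note that $R_G$ maps the polynomial $h$, which we only know is a polynomial function on $\C^N$, to another such polynomial, which is automatic since averaging a polynomial over a compact group action by linear substitutions stays polynomial of the same degree.
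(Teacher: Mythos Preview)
Your proof is correct and follows essentially the same route as the paper: use the Nullstellensatz to produce a (not necessarily invariant) polynomial separating $A$ from $B$, then average it with the Reynolds operator $R_G$ and use the $G$-invariance of $A$ and $B$ to check that the values $0$ and $1$ survive the averaging. One cosmetic slip: with $p\in\mathcal{I}_A$ and $q\in\mathcal{I}_B$ you have $p|_A=0$ and $q|_B=0$, so the separating polynomial should be $h=p$ (giving $h|_A=0$, $h|_B=1$), not $h=q$; your parenthetical ``$p|_B=0$'' has the labels swapped.
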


\begin{proof}

Let ${\cal I}$ denote the ideal of polynomials in $\C[x_1,\ldots,x_m]$ which vanish on $A$, i.e. ${\cal I}_A$. Then $L={\cal I}_{|B}$ is an ideal in ${\cal O}(B)=\C[x_1,\ldots,x_n]_{|B}$. As $A \bigcap B=\emptyset$ there exists no $b\in B$ such that all polynomials in $L$ vanish on $b$. Due to Hilbert's Nullstellensatz we have that $L={\cal O}(B)$. In particular, $L$ contains the constant polynomial $1$. Let $f'$ denote the polynomial in ${\cal I}$ such that $f'_{|B}=1$. Hence, $f'$ has the required property that it vanishes on $A$ and is constantly $1$ on $B$. In order to construct now a function which is $G$--invariant, we use the Reynolds operator, $R_G$ explained above, which maps polynomials into $G$--invariant polynomials and define $f=R_G(f')$. In order to complete the proof now, it remains to show that also the $G$--invariant polynomial $f$ fulfills the conditions that $f|_{A}=0$ and $f|_{B}=1$. As shown in \cite{Wa16} (see Corollary 15), for a $G$--invariant and Zariski closed subset, $X$, it holds that $R_G(f')|_{X}=R_G(f'_{|X})$. As both, $A$ and $B$ are Zariski closed and $G$--invariant this implies that $f_{|B}=R_G(f'_{|B})=1$ and $f_{|A}=R_G(f'_{|A})=0$, which completes the proof.
\end{proof}

We use this lemma now in order to prove property 2 of Lemma~\ref{properties}, which we state as the following theorem.

\begin{theorem}
\label{propertyii}
The set $G\mC\eqdef\{g|\psi\ra\;|\;g\in G\;\;;\;\;|\psi\ra\in\mC\}$ is open with a complement of lower dimension in $\mH_n$.
\end{theorem}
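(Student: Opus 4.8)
The plan is to establish Theorem~\ref{propertyii} by showing that the complement of $G\mC$ inside $\mH_n$ is contained in a proper algebraic subvariety of $\mH_n$, which automatically has lower dimension (and hence measure zero and empty interior), so that $G\mC$ is open with complement of lower dimension. The starting observation is that a critical state $|\psi\ra$ lies in $\mC$ precisely when $\dim G|\psi\ra = \dim G$, equivalently when $\dim G_\psi = 0$. I would first characterize $G\mC$ as the set of states lying on a \emph{closed} $G$-orbit of maximal dimension $\dim G$: by part~3 of the Kempf--Ness theorem, $G|\phi\ra$ is closed iff it meets $Crit(\mH_n)$, so $G\mC = \{|\phi\ra \mid G|\phi\ra \text{ is closed and } \dim G|\phi\ra = \dim G\}$. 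Call this set $\mH_n^{\rm gen}$.

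Next I would invoke standard facts from geometric invariant theory (available in~\cite{Wa16}, and classical results of Luna/Kempf). The key structural input is that the closed orbits of maximal dimension form a Zariski-open subset of $\mH_n$: the set of points whose orbit has dimension $<\dim G$ is Zariski-closed (it is cut out by the vanishing of appropriate minors of the matrix describing the differential of the action, i.e.\ by the condition that the map $Lie(G)\to \mH_n$, $X\mapsto X|\psi\ra$, has rank $<\dim G$), and — this is the substantive point — the set of points whose orbit is \emph{not} closed, \emph{or} whose orbit closure contains a smaller orbit, is also contained in a proper subvariety. Concretely, one uses the previous lemma: the ``null cone-type'' bad locus and the generic closed-orbit locus can be separated by a SLIP $f$ with $f|_{\rm bad} = 0$ and $f \neq 0$ on a point of $\mC$ (such a point exists, e.g.\ $|L_n\ra$, which is critical with finite — hence $0$-dimensional — stabilizer, so $|L_n\ra\in\mC$). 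Since $f$ is a $G$-invariant polynomial, its zero set is $G$-invariant and Zariski-closed, and $G\mC$ contains the nonempty Zariski-open set $\{f\neq 0\}\cap\{\dim G|\psi\ra = \dim G\}$; conversely every point of $G\mC$ has an orbit of full dimension, so $G\mC$ differs from a Zariski-open set only in lower-dimensional strata. I would then assemble these: $G\mC$ is Zariski-open (up to the lower-dimensional orbit-dimension stratification, which I must argue does not intersect the relevant open set), hence open in the Euclidean topology with complement a proper subvariety, which has strictly lower dimension.

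The main obstacle I anticipate is the precise bookkeeping of \emph{which} bad loci need to be excluded and the verification that each is genuinely a proper subvariety. There are two distinct ``bad'' phenomena to rule out: (i) states whose $G$-orbit has dimension strictly less than $\dim G$ (non-generic orbit dimension), and (ii) states whose $G$-orbit has full dimension but is \emph{not} closed, so that the orbit degenerates to something in which the stabilizer could jump. Phenomenon (i) is handled by a rank condition and is clearly Zariski-closed and proper (it misses $|L_n\ra$). Phenomenon (ii) is the delicate one: I would use the Kempf--Ness/GIT picture — the quotient map $\pi:\mH_n\to \mH_n/\!\!/G = \mathrm{Spec}\,\C[\mH_n]^G$ separates closed orbits, the generic fiber is a single closed orbit of maximal dimension, and the locus where the fiber is larger (``non-stable'' locus) is the image-inverse of a proper subvariety of the quotient — together with the separating-SLIP lemma just proved, which lets me exhibit an explicit invariant function vanishing on the bad locus but not identically. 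The remaining care is to confirm that $\mC$ itself is nonempty with a point of maximal orbit dimension (done via $|L_n\ra$ and Lemma~\ref{trivialstab}, or for general $n\geq 5$ via $f_2$/$f_4$), so that the Zariski-open set we carve out is nonempty and the dimension count goes through.

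Finally, I would note that once $G\mC$ is shown to contain a nonempty Zariski-open subset $U$ of $\mH_n$, the inclusions $U\subseteq G\mC$ and $\mH_n\setminus G\mC \subseteq \mH_n\setminus U$ give immediately that $G\mC$ is Euclidean-open with complement contained in the proper subvariety $\mH_n\setminus U$, hence of strictly lower dimension; this is exactly the assertion of Theorem~\ref{propertyii}, and it is also what is needed downstream (property~2 of Lemma~\ref{properties}) to conclude that states \emph{not} in $G\mC$ form a measure-zero set.
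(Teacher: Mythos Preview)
Your overall strategy matches the paper's: characterize $G\mC$ via Kempf--Ness as the set of states with closed $G$-orbit of maximal dimension, cut out the ``small orbit'' locus $\mZ=\{\psi:\dim G\psi<\dim G\}$ as Zariski-closed by a rank condition, and use the separating-SLIP lemma together with the existence of $|L_n\ra\in\mC$. Two points, however, need to be tightened.

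First, your application of the separating lemma is imprecise. You cannot take $A$ to be ``the bad locus'' (non-closed orbits together with $\mZ$), because the set of points with non-closed orbit is not a priori Zariski-closed, so the lemma does not apply to it directly. The paper instead takes $A=\mZ$ and $B=G|L_n\ra$ (a single closed orbit, hence Zariski-closed), obtaining a SLIP $f$ with $f|_{\mZ}=0$ and $f|_{G|L_n\ra}=1$. The decisive step you are missing is then a short boundary argument: if $f(|\chi\ra)\neq 0$ then $|\chi\ra\notin\mZ$, so $\dim G|\chi\ra=\dim G$; if moreover $G|\chi\ra$ were \emph{not} closed, pick $|\xi\ra\in\overline{G|\chi\ra}\setminus G|\chi\ra$. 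Since $f$ is a $G$-invariant polynomial it is constant on orbit closures, so $f(|\xi\ra)=f(|\chi\ra)\neq 0$, whence $|\xi\ra\notin\mZ$ and $\dim G|\xi\ra=\dim G$. But $G|\chi\ra$ is Zariski-open in its closure, so the boundary is Zariski-closed of dimension strictly below $\dim G$, and cannot contain a $G$-orbit of dimension $\dim G$ --- contradiction. Hence $\{f\neq 0\}\subseteq G\mC$. Your GIT invocation (``the stable locus is Zariski-open'') would also close this gap, but it is a black box that the paper replaces by this elementary argument built on the lemma just proved.

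Second, your final paragraph contains a genuine logical slip: from $U\subseteq G\mC$ with $U$ nonempty Zariski-open you correctly deduce that $\mH_n\setminus G\mC\subseteq\mH_n\setminus U$ has lower dimension, but this does \emph{not} imply $G\mC$ is Euclidean-open --- a set can contain a dense open subset without being open. You need the stronger conclusion that $G\mC$ is itself Zariski-open. The paper obtains this because the boundary argument above works with \emph{any} $|\phi\ra\in G\mC$ in place of $|L_n\ra$, so every point of $G\mC$ lies in some principal Zariski-open set $\{f_\phi\neq 0\}\subseteq G\mC$, exhibiting $G\mC$ as a union of Zariski-open sets.
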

\begin{proof}
For $n=2,3,4$ this result has been proven in \cite{GoWa11}. Hence, it remains to prove it for any $n\geq 5$. To do so, we denote by $d\eqdef\dim(G)$ and by
\be
\mZ\eqdef \left\{|\psi\ra\in\mH_n\;\Big|\;\dim\left(G|\psi\ra\right)<d\right\}\;.
\ee
Then, by definition, $|\psi\ra\in\mZ$ if and only if for any $d$ elements $\{X_j\}_{j=1,...,d}$ in $Lie(G)$ the vectors
$\{X_j|\psi\ra\}_{j=1,...,d}$ are linearly dependent. That is, $|\psi\ra\in\mZ$ if and only if
\be
X_1|\psi\ra\wedge X_2|\psi\ra\wedge\cdots\wedge X_d|\psi\ra=0\quad\forall\;X_j\in Lie(G)\;,
\ee
with $j=1,2,...d$. The condition above can be expressed in terms of equations of the form
$f(|\psi\ra)=0$, where $f$ is polynomial, and we conclude that the set $\mZ$ is Zariski closed in $\mH_n$.
Its complement $\mH_n-\mZ$ is therefore Zariski open.

Now, let
$|\phi\ra\in\mH_n$ be a state such that $G|\phi\ra$ is Zariski closed and of dimension $d$ (i.e. maximal dimension). Note that such a state exists, as the critical state $\ket{L_n}$ (see Eq. (\ref{eq_Lstate})) with trivial stabilizer (in $G$) fulfills these requirements. We therefore have $\mZ\cap G|\phi\ra=\emptyset$. Moreover, note that both $\mZ$ and $G|\phi\ra$ are $G$-invariant. From the lemma above it follows that there exists a SLIP $f$
such that $f(|\psi\ra)=0$ for all $|\psi\ra\in\mZ$ and $f(|\psi\ra)=1$ for all $|\psi\ra\in G|\phi\ra$.
Now, let $|\chi\ra\in\mH_n$ be another state with $f(|\chi\ra)\neq 0$. Then, $|\chi\ra\notin\mZ$ so that $\dim\left(G|\chi\ra\right)=d$. We assert that the orbit $G|\chi\ra$ is Zariski closed. Otherwise, take $|\xi\ra\in\overline{G|\chi\ra}$ such that $|\xi\ra\notin G|\chi\ra$. Since $f(|\xi\ra)=f(|\chi\ra)\neq 0$ we have $|\xi\ra\neq 0$ and $|\xi\ra\notin\mZ$, and therefore $\dim(G|\xi\ra)=d$.
Now, applying Theorem~6 in page 63 of~\cite{Sh94} with the map $g\mapsto g|\chi\ra$
implies that $G|\chi\ra$ is Zariski open in $\overline{G|\chi\ra}$. Hence, the boundary
$B\eqdef\overline{G|\chi\ra}-G|\chi\ra$ is Zariski closed. Therefore,
since $G|\xi\ra\subset B$ we must have $\dim(G|\xi\ra)<d$ and therefore have a contradiction with the assumption that $G|\chi\ra$ is not Zariski closed.
Hence, for any $|\chi\ra$ with $f(|\chi\ra)\neq 0$, the orbit $G|\chi\ra$ is Zariski closed. We denote by
\be
\mH_f\eqdef\left\{|\psi\ra\in\mH_n\;|\;f(|\psi\ra)\neq 0\right\}\;.
\ee
The set $\mH_f$ is Zariski open since its complement in $\mH_n$ is Zariski closed as it is determined by only one polynomial equation $f(|\psi\ra)=0$. It is not empty since $|\chi\ra\in\mH_f$, and as we argued above, the orbit under $G$ of any state in $\mH_f$ is Zariski closed and of maximal dimension. We therefore conclude that $G\mC$ (the set of all states with closed orbits of maximal dimension, which coincides with ${\cal H}_f$) is Zariski open. Hence, it is open and its complement is of lower dimension in $\mH_n$.
\end{proof}

We now proof Property~3 of Lemma~\ref{properties}.
\begin{theorem}
\label{propertyiii}
The set $\mC$ defined in Eq (\ref{C0}) is a connected smooth submanifold of $\mH_n$, and $K$ acts differentiably on $\mC$.
\end{theorem}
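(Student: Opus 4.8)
The plan is to realize $\mC$ as a regular level set of a moment-type map sitting inside the open set $G\mC$, to deduce smoothness from the regular value theorem (this is the ``implicit function theorem'' step), and to obtain connectedness from the Kempf--Ness correspondence together with the connectedness of $K$.

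First I would set up the level-set description. Fix a real basis $H_1,\dots,H_d$, $d=\dim_{\mathbb{R}}Lie(K)=3n$, of the space of traceless Hermitian operators on $\mH_n$ of the local form $I\otimes\cdots\otimes h\otimes\cdots\otimes I$ (i.e.\ of $i\,Lie(K)$), and define $F:\mH_n\to\mathbb{R}^{d}$ by $F(|\psi\ra)=\big(\la\psi|H_j|\psi\ra\big)_{j=1,\dots,d}$; thus $F(|\psi\ra)$ records the traceless parts of the one-qubit reduced states. Since $Lie(G)=Lie(K)\oplus i\,Lie(K)$ and $\la\psi|X|\psi\ra=0\iff\la\psi|iX|\psi\ra=0$, we have $Crit(\mH_n)=F^{-1}(0)$. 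Next I would note that $\mC=Crit(\mH_n)\cap G\mC$: every $|\psi\ra\in G\mC$ has a closed $G$-orbit of dimension $\dim G$ (this is precisely what the proof of Theorem~\ref{propertyii} shows, $G\mC=\mH_f$), hence $\dim G_\psi=0$, so a critical state lies in $G\mC$ iff its orbit has maximal dimension. As $G\mC$ is open in $\mH_n$ (Theorem~\ref{propertyii}), it then suffices to show that $F$ restricted to $G\mC$ has $0$ as a regular value. For this, compute $dF_{|\psi\ra}(v)_j=2\,\mathrm{Re}\,\la v|H_j|\psi\ra$, so a relation $\sum_j c_j\,dF_{|\psi\ra}(\,\cdot\,)_j\equiv 0$ forces $(\sum_j c_jH_j)|\psi\ra=0$; if $H=\sum_j c_jH_j\neq 0$ annihilated $|\psi\ra$ then $\{e^{tH}\}_{t\in\mathbb{R}}\subset G$ (each local factor is the exponential of a traceless matrix) would be a one-parameter subgroup of $G_\psi$, contradicting $\dim G_\psi=0$. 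Hence $dF_{|\psi\ra}$ is surjective for all $|\psi\ra\in G\mC$, so $\mC=F^{-1}(0)\cap G\mC$ is a smooth embedded submanifold of the open set $G\mC$, hence of $\mH_n$ (of real dimension $2^{n+1}-3n$). Finally, since $K\subseteq G$, conjugation by $k\in K$ preserves $G$, so $G(k|\psi\ra)=k\,G|\psi\ra$ has the same dimension as $G|\psi\ra$; and by part~1 of Theorem~\ref{KN} the property of being norm-minimal in one's $G$-orbit is preserved by the unitary $k$, so $Crit(\mH_n)$, hence $\mC$, is $K$-invariant, and the linear (thus smooth) $K$-action on $\mH_n$ restricts to a differentiable action on $\mC$.

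The remaining point, connectedness of $\mC$, is where I expect the real work to lie. The set $G\mC=\mH_f$ is a nonempty Zariski-open subset of $\mH_n\cong\C^{2^{n}}$, hence connected. By Theorem~\ref{KN}, each $|\psi\ra\in G\mC$ has closed $G$-orbit meeting $Crit(\mH_n)$ in a single $K$-orbit, which lies in $\mC$ (it is critical and its orbit has maximal dimension); this defines a surjection $r:G\mC\to\mC/K$ which is continuous, being the restriction of the GIT quotient map $\mH_n^{ss}\to\mH_n/\!\!/G$ (equivalently, the norm-minimum $\min_{g\in G}\|g|\psi\ra\|$ and the $K$-orbit on which it is attained vary continuously with $|\psi\ra$). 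Hence $\mC/K=r(G\mC)$ is connected. On the other hand, because $K$ is connected and every stabilizer $K_\psi$ with $|\psi\ra\in\mC$ is finite, each fiber of the orbit map $q:\mC\to\mC/K$ is $K|\psi\ra\cong K/K_\psi$, which is connected. An open continuous surjection onto a connected space all of whose fibers are connected has connected total space (if $\mC=U\sqcup V$ with $U,V$ open, nonempty, disjoint, then $q(U),q(V)$ are open, cover $\mC/K$, and are disjoint since no fiber meets both $U$ and $V$ — contradicting connectedness of $\mC/K$). Therefore $\mC$ is connected.

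The delicate part throughout is this last paragraph: one must justify carefully that the Kempf--Ness retraction $r$ is genuinely continuous and that $\mC/K$ carries a well-behaved (Hausdorff, second-countable) quotient topology identifiable with an open piece of $\mH_n/\!\!/G$. A clean alternative, if one prefers to avoid the quotient space, is to invoke the $K$-equivariant strong deformation retraction of $G\mC$ onto $\mC$ given by the gradient flow of $|\psi\ra\mapsto\log\|\,|\psi\ra\,\|$ along $G$-orbits, which reduces the connectedness of $\mC$ directly to that of the Zariski-open set $G\mC$.
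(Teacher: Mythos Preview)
Your submanifold argument is essentially the paper's: both realize $\mC$ as the zero set of the moment-type functions $\psi\mapsto\la\psi|H_j|\psi\ra$ (the paper uses the anti-Hermitian $X_j=iH_j$, a cosmetic difference) and invoke the implicit function theorem, with regularity at $\psi\in\mC$ coming from the linear independence of the $H_j|\psi\ra$, which in turn is exactly the finiteness of $G_\psi$. Your extra step of writing $\mC=Crit(\mH_n)\cap G\mC$ and working inside the open set $G\mC$ is a nice way to package this.

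For connectedness, however, the paper's argument is considerably shorter than yours and sidesteps the point you yourself flag as delicate. Instead of passing to $\mC/K$ and invoking continuity of the Kempf--Ness retraction (or a gradient-flow deformation), the paper argues directly: if $\mC=\mC_1\sqcup\mC_2$ with $\mC_i$ nonempty and open in $\mC$, then $G\mC_1$ and $G\mC_2$ are open, cover the connected Zariski-open set $G\mC$, and hence intersect; any intersection point lies in both $G\mC_1$ and $G\mC_2$, giving $|\psi\ra\in\mC_1$, $|\phi\ra\in\mC_2$ with $g|\psi\ra=|\phi\ra$, whence by Kempf--Ness (part~2) $k|\psi\ra=|\phi\ra$ for some $k\in K$, and a path $\sigma:[0,1]\to K$ from $I$ to $k$ yields a continuous curve $\sigma(t)|\psi\ra$ in $\mC$ joining $\mC_1$ to $\mC_2$. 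This uses only connectedness of $K$ and of $G\mC$ and the pointwise Kempf--Ness statement, with no quotient topology, no continuity of any retraction, and no GIT machinery. Your approach is correct and more conceptual (it essentially proves the stronger fact that $\mC\hookrightarrow G\mC$ is a deformation retract), but the paper's is the more elementary route you may have been looking for.
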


\begin{proof}
We start with the proof the $\mC$ is a submanifold of $\mH_n$. Let $X_{1},...,X_{m}$ be a basis of $Lie(K)$ (for example, the 3 matrices $i\sigma_1$, $-i\sigma_2$, and $i\sigma_3$, where $\{\sigma_j\}_{j=1,2,3}$ are the 3 Pauli matrices, form the basis of the Lie algebra of $SL(2,\mathbb{C})$). Set $f_{j}(\psi)=\left\langle \psi|X_{j}|\psi\right\rangle$. With these notations,
\[
Crit(\mH_n)=\left\{|\psi\ra\in\mH_n \;\Big|\;f_{j}(\psi)=0\quad\forall\;j=1,...,m\right\}.
\]
Since $X_j$ are all anti-Hermitian we have $\operatorname{Re}f_{j}(\psi)=0$  for all $|\psi\ra\in\mH_n$ and all $j=1,...,m$. We therefore define the following symplectic bilinear form $\omega(\psi,\phi)=\operatorname{Im}\left\langle
\psi|\phi\right\rangle$ (the imaginary part). Then
$$
\left( df_{j}\right)_{\psi}(\phi)=\lim_{t\to 0}\frac{d}{dt}f_{j}(\phi+t\psi)=-2i\omega\left(X_{j}\psi,\phi\right)\;.
$$
Since $\omega$ is nondegenerate, we see that if
$\psi\in \mC$  then the set $\{\phi\in\mH_n|\left(df_{j}\right)  _{\psi}(\phi)=0,\;j=1,...,m\}$
has \emph{real} dimension equal to $2^{n+1}-m$ (that is, the real dimension of $\mH_n$ minus the $m$ real (orthogonality) conditions $\operatorname{Im}\la\psi| X_j|\phi\ra=0$). Since $\mC\neq \emptyset$
the implicit function theorem now implies that
$\mC$ is a real submanifold of $\mH_n$ of dimension $2^{n+1}-m$ with $K$
acting differentiably.

Next, we prove that $\mC$ is connected.
Suppose that $\mC=\mC_{1}\cup
\mC_{2}$ with $\mC_{1}$ and $\mC_2$ non-empty sets that are open in $\mC$.
We will show that we get a contradiction if we assume $\mC_{1}\cap
\mC_{2}=\emptyset$.  Then $G\mC_{1}$ and $G\mC_{2}$ are
open and since their union is $G\mC$ (which is Zariski open and dense) they must intersect. Thus there
exist $|\psi\ra\in \mC_{1}$ and $|\phi\ra\in \mC_{2}$ and $g\in G$ such that $g|\psi\ra=|\phi\ra$. Thus
$\left\Vert |\phi\ra\right\Vert =\left\Vert g|\psi\ra\right\Vert \geq\left\Vert |\psi\ra\right\Vert
$ and $\left\Vert |\psi\ra\right\Vert =\left\Vert g^{-1}|\phi\ra\right\Vert \geq\left\Vert
|\phi\ra\right\Vert $ so $\left\Vert |\psi\ra\right\Vert =\left\Vert |\phi\ra\right\Vert $ and this
implies due to the Kempf--Ness theorem that there exists $k\in K$ with $k|\psi\ra=|\phi\ra$. Let $\sigma:[0,1]\rightarrow
K$ be continuous such that $\sigma(0)=I$ and $\sigma(1)=k$ (recall $K$ is connected). Then,
$|\varphi(t)\ra\eqdef \sigma(t)|\psi\ra$  is a continuous curve joining $|\psi\ra$ to $|\phi\ra$ in
$\mC$. This is a contradiction.
\end{proof}

\section*{Appendix C: Proof of Main Theorem}
\label{sectionC}

In this appendix we prove the main result of the paper, namely that almost all qubits states of more than four qubits have a trivial stabilizer (in $\tilde{G}$) (see Theorem~\ref{mtheorem}). In order to do so, we use the principle orbit type theorem and Lemma \ref{properties}, which states some important properties of the set of critical states with finite stabilizer in $G$.
Again, in order to enhance readability we restate the theorem before we prove it.

{\bf Theorem 4.}
{\it For $n\geq 5$ there exists a subset of states $\mA\subset\mH_n$ that is open, dense, and of full measure in $\mH_n$ and for which the stabilizer group $\tilde{G}_{\psi}=\{I\}$ is trivial for all $|\psi\ra\in\mA$.}

{\it Proof of Theorem~\ref{mtheorem}:}
We first show that there exists an open, dense, and of full measure set $\mA'$ in $\mH_n$ such that for all $|\psi\ra\in\mA'$, $G_\psi=\{I\}$ is trivial, and then use it to show that there also exists an open, dense, and of full measure subset $\mA$ in $\mH_n$ for which
$\tilde{G}_\psi=\{I\}$ for all $|\psi\ra\in\mA$.

We start by applying the principle orbit type theorem (that is, Theorem~\ref{principal}) with $\mM$ being the connected smooth submanifold $\mC\subset\mH_n$ (see property~3 of Lemma~\ref{properties}),
and the compact Lie group $C=K$. Since $|L_n\ra\in\mC$ has trivial stabilizer in $K$, the principle orbit type theorem implies that the set of all states $|\psi\ra\in\mC$ with $K_\psi=\{I\}$ is open with complement of lower dimension in $\mC$.
Hence, due to property~1 of Lemma~\ref{properties} we have that the set $\mC'= \left\{|\psi\ra\in\mC\;\big|\;G_\psi=\{I\}\right\}$ is open  with complement of lower dimension in $\mC$.

Next, denote by $G\mC'=\left\{g|\psi\ra\;\big|\;|\psi\ra\in\mC'\;;\;g\in G\right\}$ and note that the stabilizer group of any state $|\phi\ra\in G\mC'$ is trivial. Indeed, let $g\in G$ be such that $|\phi\ra=g|\psi\ra$ with $|\psi\ra\in\mC'$. Then,
$$
G_\phi=G_{g\psi}=gG_\psi g^{-1}=\{I\}\;,
$$
where in the last equality we used the fact that $G_\psi=\{I\}$ since $|\psi\ra\in \mC'$. Now, since $\mC'$ is open with complement of lower dimension in $\mC$, we now show that $G\mC'$ is open with complement of lower dimension in $G\mC$. We will use the symbol $\dim$ for real dimension. Thus $\dim\mathcal{H}%
_{n}=2^{2n}$. If $\left\vert \phi\right\rangle ,\left\vert \psi\right\rangle
\in\mathcal{C}$ then the Kempf-Ness theorem implies that if $g\left\vert
\phi\right\rangle =\left\vert \psi\right\rangle $ then there exists $k\in K$
such that $k\left\vert \phi\right\rangle =\left\vert \psi\right\rangle $. This
implies that $2^{2n}=\dim G\mathcal{C}=\dim G/K+\dim\mathcal{C}$. Arguing in
the same way we see that
\[
\dim(G\mathcal{C}-G\mathcal{C}^{\prime})=\dim G(\mathcal{C}-\mathcal{C}%
^{\prime})=
\]%
\[
\dim G/K+\dim(\mathcal{C}-\mathcal{C}^{\prime})<\dim G/K+\dim\mathcal{C}%
=2^{2n}.
\]

Thus, the set $\mA'\eqdef G\mC'$ is open and has a complement of lower dimension
 in $G\mC$, and since $G\mC$ is open and has a complement of lower dimension
 in $\mH_n$ (see property~2 in Lemma~\ref{properties}), we also have that $\mA'$ is open and has a complement of lower dimension
 in $\mH_n$. Hence, the set of all states in $\mH_n$ with $G_\psi=\{I\}$ is open and has a complement of lower dimension
 in $\mH_n$.

Finally, consider the group $\tilde{G}$. Any element $\tilde{g}\in\tilde{G}$ can be written as $\tilde{g}=tg$ where $0\neq t\in\mathbb{C}$ and $g\in G$. Consider first the case of an even $n$. In this case, define
\be
\mA\eqdef\left\{|\psi\ra\in\mA'\;\Big|\;f_2(|\psi\ra)\neq 0\right\}\;,
\ee
where $f_2$ is the homogeneous SLIP of degree 2 defined in the main text.
Note that the set $\mA$ is open and has a complement of lower dimension in $\mH_n$ since $\mA'$ has these properties.
Now, let $|\psi\ra\in\mA$ and $\tilde{g}\in\tilde{G}_{\psi}$. We then have
\be\label{ggg1}
0\neq f_2\left(|\psi\ra\right)=f_{2}\left(\tilde{g}|\psi\ra\right)=t^2f_{2}\left(g|\psi\ra\right)=t^2f_{2}\left(|\psi\ra\right)\;.
\ee
Hence, $t^2=1$ so that $t=\pm 1$ and $\tilde{g}=\pm g$. Since both $g$ and $-g$ belong to $G$ we conclude that
$\tilde{g}$ must belong to $G$. We therefore have
\be\label{gpsi}
\tilde{G}_{\psi}=G_{\psi}=\{I\}\quad\forall\;|\psi\ra\in\mA\;.
\ee
This completes the proof of the theorem for even number of qubits.
For odd $n$, repeat the same argument as in~\eqref{ggg1} but instead of $f_2$ use the homogeneous SLIP of degree 4 defined in Eq.~\eqref{f4}. In this case we get $t^4=1$ so that $t=\pm 1,\pm i$. In~\cite{GW13} it was shown that for an odd number ($\geq 5$) of qubits there is also a homogeneous SLIP of degree $6$. Denoting this polynomial by $f_6$ we define $\mA$
to be
\be
\mA\eqdef\left\{|\psi\ra\in\mA'\;\Big|\;f_4(|\psi\ra)\neq 0\;;\;f_6(|\psi\ra)\neq 0\right\}\;.
\ee
Again, the set $\mA$ is open and has a complement of lower dimension in $\mH_n$ since $\mA'$ has these properties.
Hence, for all $|\psi\ra\in\mA$ and $\tilde{g}\in\tilde{G}_\psi$ we have $\tilde{g}=tg$ with $t^4=t^6=1$ and $g\in G$.
We therefore must have $t=\pm 1$ so that $\tilde{g}\in G$ and the equalities in Eq.~\eqref{gpsi} holds in this case as well. This completes the proof.$\quad\quad\quad\quad\Box$

\section*{Appendix D: Applications to multipartite entanglement}
\label{sectionD}

We prove here the two theorems presented in the main text, Theorem \ref{ThnoDet} and Theorem \ref{ThmaxProb}.
Whereas the first of these theorems states that there is almost no non--trivial deterministic transformation (via local operations) possible among n--way entangled qubit--states, the second presents the maximal success probability with which the transformation can be achieved. In order to increase readability we restate here the theorems.

{\bf Theorem 6.}
{\it Let $|\psi\ra,|\phi\ra\in\mH_n$ be two $n$-way entangled states and suppose also that the stabilizer $\tilde{G}_\psi=\{I\}$ is trivial. Then, $|\psi\ra$ can be converted deterministically to $|\phi\ra$ by LOCC or SEP operations if and only if $|\psi\ra$ and $|\phi\ra$ are related by a local unitary operation; that is, there exists $u\in \tilde{K}$ such that
$|\psi\ra=u|\phi\ra$.}

\begin{proof}
We prove the theorem for SEP and since LOCC is a subset of SEP the proof applies also to LOCC.
Suppose that there exists a SEP operation $\mE$ such that $|\phi\lr\phi|=\mE(|\psi\lr\psi|)$. Then, $|\psi\ra$ and $|\phi\ra$ must be in the same invertible SLOCC class and there exists $g\in\tilde{G}$ such that $|\phi\ra=g|\psi\ra$.
Note that $g$ is unique since $\tilde{G}_\psi$ is trivial. Let $\{M_j\}$ be the operator sum representation of $\mE$ such that each $M_j$ has a tensor product form $A_1\otimes A_2\otimes\cdots\otimes A_n$. Following similar arguments as in~\cite{GoWa11}, the condition
$|\phi\lr\phi|=\mE(|\psi\lr\psi|)$ is equivalent to
$$
M_j|\psi\ra=c_j|\phi\ra\;,
$$
where $c_j$ are complex numbers satisfying $\sum_j|c_j|^2=1$. Note that if $c_j\neq 0$ then $M_j\in\tilde{G}$ since $\mE$ is SEP and $|\phi\ra$ is $n$-way entangled. Substituting $|\phi\ra=g|\psi\ra$ in the equation above and moving all terms to the LHS we get that for all $j$ such that $c_j\neq 0$ we have
\be
\frac{1}{c_j}g^{-1}M_j|\psi\ra=|\psi\ra\;.
\ee
Now, since the stabilizer $\tilde{G}_\psi=\{I\}$, and $\frac{1}{c_j}g^{-1}M_j\in\tilde{G}$, we get that for
all $j$ such that $c_j\neq 0$
\be
M_j=c_jg\;.
\ee
Combining this with the fact that the transformation is deterministic, i.e. $\sum_jM^{\dag}_{j}M_j=I$ gives
\be
g^\dag g+\sum_{j\;:\;c_j= 0}M_{j}^{\dag}M_j=I\;.
\ee
Acting with both sides of this equation on $|\psi\ra$ gives $g^{\dag}g|\psi\ra=|\psi\ra$, since for $j$ with $c_j=0$ we have $M_j|\psi\ra=0$. Hence, $g^{\dag}g$ is in the stabilizer of $|\psi\ra$. Since the stabilizer is trivial we have $g^{\dag}g=I$
or equivalently, $g\in U$; i.e. $|\phi\ra=g|\psi\ra$ are related by local unitary operation.
\end{proof}

The second theorem, which we prove here determines the maximal success probability with which the transformations studied above are possible.

{\bf Theorem 7.}
{\it Let $|\psi\ra\in\mH_n$ be an $n$-way entangled state with trivial stabilizer $\tilde{G}_\psi=\{I\}$. Let $|\phi\ra=g|\psi\ra$ be a normalized state in the SLOCC orbit $\tilde{G}|\psi\ra$. Then, the maximum probability with which $|\psi\ra$ can be converted to $|\phi\ra$ by LOCC or SEP is given by:
\be\label{pmax}
p_{\max}(|\psi\ra\to|\phi\ra)=\frac{1}{\lambda_{\max}(g^{\dag}g)},
\ee
where $\lambda_{\max}(X)$ denotes the maximal eigenvalue of $X$.}

As mentioned in the main text, the proof below implies that the RHS of~\eqref{pmax} always provides a lower bound on the maximal probability even if $|\psi\ra\notin\mA$. However, if the stabilizer of $|\psi\ra$ is not trivial, $g$ is not unique, and the lower bound given by the RHS of~\eqref{pmax} can be improved by maximizing it over all $g\in\tilde{G}$ that satisfies $|\phi\ra=g|\psi\ra$.

\begin{proof}
We start by showing that this probability can be achieved by LOCC. Since $g\in\tilde{G}$ we can write it as $g=g_1\otimes g_2\otimes\cdots\otimes g_n$ with each $g_j\in GL(2,\mathbb{C})$ and $j=1,...,n$.
Thus, $\lambda_{\max}(g^{\dag}g)=\lambda_1\lambda_2\cdots\lambda_n$, where $\lambda_j\eqdef\lambda_{\max}(g^{\dag}_{j}g_j)$ is the largest eigenvalue of $g_{j}^{\dag}g_j$. For each $j=1,...,n$ we define the following two outcome generalized measurement:
\begin{align}
& N_{0|j}\eqdef \frac{1}{\sqrt{\lambda_j}}g_j\nonumber\\
& N_{1|j}\eqdef \sqrt{I_2-\frac{1}{\lambda_j}g_{j}^{\dag}g_j},
\end{align}
where $I_2$ is the $2\times 2$ identity matrix. Note that from the definition of $\lambda_j$ we have $I_2-\frac{1}{\lambda_j}g_{j}^{\dag}g_j\geq 0$. Thus,
the matrices above define a local generalized measurement since $N_{0|j}^{\dag}N_{0|j}+N_{1|j}^{\dag}N_{1|j}=I_2$.
Now, suppose that each of the $n$ parties performs this measurement on their $j$'s share of $|\psi\ra$. Then, the probability that each of the $n$ parties get the outcome $0$ is:
\begin{align}
& \la\psi|N_{0|1}^{\dag}N_{0|1}\otimes N_{0|2}^{\dag}N_{0|2}\otimes\cdots\otimes N_{0|n}^{\dag}N_{0|n}|\psi\ra\nonumber\\
& =\frac{1}{\lambda_1\cdots\lambda_n}\la\psi|g^{\dag}g|\psi\ra=\frac{1}{\lambda_{\max}(g^{\dag}g)}
\end{align}
where in the last equality we have used the fact that $|\phi\ra=g|\psi\ra$ is normalized. Furthermore, the state after this outcome occur is
\be
\frac{1}{\|N_{0|1}\otimes\cdots\otimes N_{0|n}|\psi\ra\|}N_{0|1}\otimes\cdots\otimes N_{0|n}|\psi\ra=|\phi\ra\;.
\ee
This completes the proof that $p_{\max}$ in~\eqref{pmax} is achievable. We now show that it is also optimal.

Any LOCC (or SEP) protocol that convert $|\psi\ra$ to $|\phi\ra$ with probability $p$, also convert $|\psi\ra$ to other states with probability $1-p$. These other states can always be converted to the product state $|00...0\ra$. It is therefore enough to consider LOCC protocols converting $|\psi\ra$ to $|\phi\ra$ with probability $p$ and converting $|\psi\ra$ to $|00...0\ra$ with probability $1-p$.

Let $\mE=\mE_0+\mE_1$ be an LOCC CPTP map with $\mE_0$ and $\mE_1$ being CP and trace decreasing maps, satisfying
\begin{align}\label{req}
\mE_0(|\psi\lr\psi|) & =(1-p)|0...0\lr 0...0|\nonumber\\
\mE_1(|\psi\lr\psi|) & =p|\phi\lr\phi|
\end{align}
We denote the operator sum representations of $\mE_0$ and $\mE_1$ by $\{M_i\}$ and $\{N_{j}\}$, respectively. We therefore have
\be
M_i|\psi\ra=a_i|0...0\ra\text{ and }N_{j}|\psi\ra=b_j|\phi\ra\;,
\ee
with $\sum_i|a_i|^2=1-p$ and $\sum_j|b_j|^2=p$.
Now, if $b_j\neq 0$ we get that $\frac{1}{b_j}N_j|\psi\ra=g|\psi\ra$ so that $\frac{1}{b_j}g^{-1}N_j\in\tilde{G}_{\psi}$.
But since the stabilizer $\tilde{G}_\psi$ is trivial we must have
for all $j$ with $b_j\neq 0$:
\be
N_j=b_jg\;.
\ee
Combining this with the fact that $\mE$ is trace preserving gives
\be
pg^{\dag}g+\sum_{j\;:\;b_j= 0}N_{j}^{\dag}N_j+\sum_{i}M_{i}^{\dag}M_{i}=I\;.
\ee
This implies that $I-pg^{\dag}g$ must be a positive semi-definite matrix. Hence, $p\leq 1/\lambda_{\max}(g^{\dag}g)$ as asserted.
\end{proof}

\end{appendix}
\end{document}